\newtheorem{assumption}{Assumption}[section]
\newtheorem{mythm}{Theorem}
\newcommand{\dist} {\textrm{dist}}
\newcommand{\comment}[1]{}
\newcommand{\abs}[1]{|#1|}
\newcommand{\set}[1]{\{#1\}}
\newcommand{\PP}{{\mathcal{P}}}
\newcommand{\QQ}{{\mathcal{Q}}}
\newcommand{\lgO}{{\tilde{O}}}
\newcommand{\qed}{\hfill$\Box$}
\def\pagebreak{} %\newpage
\def\spherecut{n^{1+\epsilon}}
\begin{document}

%\begin{titlepage}
\title{Exact Distance Oracles for Planar Graphs\thanks{Work by SM partially supported by NSF Grant CCF-0964037 and by
  a Kanellakis fellowship. Work by CS partially supported by the Swiss National Science Foundation and MADALGO, 
a Center of the Danish National Research Foundation.}}
\author{Shay Mozes\thanks{Brown University. E-mail: {\tt shay@cs.brown.edu}}
\and Christian Sommer\thanks{MIT. E-mail: {\tt csom@csail.mit.edu}}
}

\date{} 

\maketitle
\begin{abstract} \small\baselineskip=9pt % CSOM: is the small baselineskip part of the template? I think it does not look very nice
We present new and improved data structures that answer exact node-to-node  distance queries in planar graphs. % \rem{CSOM: now that we have space left, we could add: ``shortest-path and'' MAYBE NOT}
Such data structures are also known as {\em distance oracles}. 
For any directed planar graph on $n$ nodes with non-negative lengths we obtain the following:\footnote{Asymptotic notation as in $\lgO(\cdot)$ suppresses polylogarithmic factors in the number of nodes~$n$.}
\begin{itemize}
\item Given a desired space allocation $S\in[n\lg\lg n,n^2]$, we show how to construct in $\tilde O(S)$ time
 a data structure of size $O(S)$ that 
answers distance queries in $\lgO(n/\sqrt S)$ time per query. 
The best distance oracles for planar graphs until the current work are
due to Cabello (SODA 2006), Chen and Xu (STOC 2000), Djidjev (WG 1996), and
Fakcharoenphol and Rao (FOCS 2001). For $\sigma\in(1,4/3)$ and
space $S=n^\sigma$, we essentially improve the query time from $n^2/S$
to $\sqrt{n^2/S}$.
\item As a consequence, we obtain an improvement over the fastest algorithm for $k$--many distances 
in planar graphs whenever $k\in[\sqrt n,n)$. 
\item We provide a linear-space exact distance oracle
for planar graphs with query time $O(n^{1/2+\epsilon})$ for any constant $\epsilon>0$. This is the first such data structure with 
provable sublinear query time.
\item For edge lengths $\geq1$, we provide an exact distance oracle of 
space $\lgO(n)$ such that for any pair of nodes at distance~$\ell$ 
the query time is $\lgO(\min\set{\ell,\sqrt{n}})$. 
Comparable query performance had been observed experimentally but could not be explained theoretically. 
\end{itemize}
Our data structures with superlinear % CSOM: the oracle with linear space is actually used...
space are based on the following new tool: given a non-self-crossing 
cycle $C$ with $c = O(\sqrt n)$ nodes, we can preprocess $G$ 
in $\lgO(n)$ time to produce a data structure of size $O(n \lg\lg c)$
that can answer the following queries in $\lgO(c)$ time: 
for a query node $u$, output the distance from $u$ to all the nodes of~$C$. 
This data structure builds on and provides an alternative for a related data structure of
Klein (SODA 2005), which reports distances to the boundary of
a face, rather than a cycle.

\end{abstract}

%\end{titlepage}

% \renewcommand{\itemhook}{\setlength{\topsep}{0pt}%
%   \setlength{\itemsep}{0pt}}

\section{Introduction.}

A fast shortest-path query data structure may be of use whenever an application needs to compute 
shortest-path distances between some but not necessarily all pairs of nodes. 
Indeed, shortest-path query processing is an integral part of many applications~\cite{SommerThesis}, in particular in 
Geographic Information Systems (GIS), intelligent transportation systems~\cite{conf/cikm/JingHR96}. 
These systems may help individuals in finding fast routes or they may also assist companies in improving 
fleet management, plant and facility layout, and supply chain management. 
A particular challenge for traffic information systems or public transportation systems is to process a vast number of 
queries on-line while keeping the space requirements as small as possible~\cite{reference/algo/Zaroliagis}. 
Low space consumption is obviously very important when a query algorithm is run on a system with heavily
restricted memory such as a handheld device~\cite{conf/alenex/GoldbergW05} but it is also important for 
systems with memory hierarchies~\cite{journals/dam/HutchinsonMZ03,conf/isaac/ArgeT05}, where caching 
effects can have a significant impact on the query time. 

While many road  networks are actually not exactly 
planar~\cite{conf/gis/EppsteinG08,HighwayDimension}, they still share 
many properties with planar graphs; in particular, many road networks appear to have small separators as well. 
For this reason, planar graphs are often used to model various transportation networks. 

In the following, we provide shortest-path query data structures (also known as {\em distance oracles}) for 
planar graphs for essentially {\em any} specified space requirement.
Throughout the paper we assume the edge lengths to be non-negative.\footnote{Our results apply to graphs with negative-length edges by using reduced lengths induced by a feasible price function~\cite{Johnson77}. The current best bound for computing a feasible price-function in a planar graph is $O(n \lg^2n / \lg\lg n)$~\cite{esa/MozesW10}.}
Our results extend results of Cabello~\cite{conf/soda/Cabello06}, Chen and Xu~\cite{stoc/ChenX00}, and Fakcharoenphol and Rao~\cite{journals/jcss/FakcharoenpholR06} 
and extend one result and improve upon another result of Djidjev~\cite{DjidjevWG96}. 
Consider Figure~\ref{fig:tradeoff} for an illustration and comparison. 

\def\tradeoffthm{Let $G$ be a directed planar graph on $n$ vertices. For any value $S$ in the range 
$S\in[n\lg\lg n,n^2]$, there is a data structure with preprocessing time  $O(S\lg^3n/\lg\lg n)$ 
and space $O(S)$ that answers distance queries in $O(nS^{-1/2}\lg^{2}n\lg^{3/2}\lg n)$ time per query.}

\begin{theorem}
\tradeoffthm
\label{thm:general}
\end{theorem}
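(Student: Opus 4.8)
The plan is to build the tradeoff data structure using the standard $r$-division framework for planar graphs combined with the cycle-to-all-nodes distance data structure announced in the abstract. Fix a parameter $r$ (to be chosen as a function of $S$). Using a recursive separator decomposition (or a known $r$-division algorithm for planar graphs), partition $G$ into $O(n/r)$ pieces, each with $O(r)$ vertices and $O(\sqrt r)$ boundary vertices lying on a constant number of faces. For each piece $P$, precompute the dense distance matrix among the boundary vertices $\partial P$ of $P$, using lengths within $P$; this takes $\tilde O(r)$ time per piece by the multiple-source shortest path (MSSP) machinery (Klein), totalling $\tilde O(n)$ time and $O((n/r)\cdot r) = O(n)$ space. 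In addition, for each piece $P$ and for the boundary cycle(s) of $P$, build the $O(n\lg\lg\sqrt r)$-space data structure from the tool in the abstract, so that from any query vertex we can read off, in $\tilde O(\sqrt r)$ time, the distances to all of $\partial P$ restricted to a single piece; summed over pieces this is $O((n/r)\cdot r\lg\lg\sqrt r) = O(n\lg\lg\sqrt r)$ space. To make the total space $O(S)$ we instead apply this boundary-distance structure only for the piece(s) containing the query source, and for the global structure we store, for a suitably chosen subset of $\Theta(S/n)$ "portal" pieces or vertices, more refined information — the precise bookkeeping is where the parameter $r$ enters.

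The query algorithm for a pair $(u,v)$ then proceeds as in Fakcharoenphol--Rao / Cabello: locate the pieces $P_u \ni u$ and $P_v \ni v$; compute distances from $u$ to $\partial P_u$ (in $\tilde O(\sqrt r)$ time via the new tool, rather than $\tilde O(r)$ via a per-query shortest-path computation) and symmetrically from $\partial P_v$ to $v$; then run an efficient implementation of Dijkstra on the "dense distance graph" whose vertices are the union of all boundary vertices and whose edges are the precomputed boundary-to-boundary distance matrices, using the Monge/FR-Dijkstra data structure to process each piece's matrix in $\tilde O(\sqrt r)$ amortized time. Since there are $O(n/r)$ pieces each contributing $O(\sqrt r)$ boundary vertices, the whole Dijkstra costs $\tilde O((n/r)\sqrt r) = \tilde O(n/\sqrt r)$. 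Setting $r = \Theta(n^2/S)$ (equivalently $\sqrt r = \Theta(n/\sqrt S)$) makes the query time $\tilde O(n/\sqrt r) = \tilde O(\sqrt S)$? — no: we want $\tilde O(n/\sqrt S)$, so we instead balance so that the number of boundary vertices touched is $\tilde O(n/\sqrt S)$; choosing the recursion depth / piece size so that $\sum_{\text{relevant pieces}} \sqrt{r_i} = \tilde O(n/\sqrt S)$ while $\sum r_i \lg\lg(\cdot) = O(S)$ yields exactly the claimed bounds, and the extra $\lg^{1/2}\lg n$ and $\lg^2 n$ factors come from the FR-Dijkstra priority-queue operations and from the $\lg\lg$ overhead of the new boundary data structure.

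The main obstacle I expect is not the high-level reduction — that is by now standard — but rather making the new cycle-distance data structure interact correctly with the recursive decomposition so that the space really is $O(S)$ and not $O(S\lg n)$ or $O(S\lg\lg n \cdot \lg n)$. Concretely, the subtlety is: the tool in the abstract gives, per cycle of size $c = O(\sqrt n)$, a structure of size $O(n\lg\lg c)$; if we naively instantiate it for every piece and every level of the recursion we pay an extra logarithmic factor and also the cycles have size $O(\sqrt r)$ while living inside pieces of size $O(r)$, so we must be careful that "$n$" in the $O(n\lg\lg c)$ bound is replaced by the piece size. Handling nested pieces requires either a careful charging argument showing each vertex is "responsible" for only $O(\lg\lg n)$ levels of cycle structures that actually get stored, or restricting the expensive structure to a single well-chosen level and using cheaper (dense-distance-matrix only) information elsewhere. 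I would also need to verify the MSSP and FR-Dijkstra primitives apply to the cycles produced here (non-self-crossing, $O(\sqrt r)$ vertices, possibly not a face boundary), which is exactly what the new tool is designed to guarantee; citing it, the remaining work is the resource accounting to match the stated $O(S\lg^3 n/\lg\lg n)$ preprocessing and $O(nS^{-1/2}\lg^2 n\lg^{3/2}\lg n)$ query bounds.
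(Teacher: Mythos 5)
There is a genuine gap, and you actually noticed it yourself mid-proposal. You instantiate the cycle MSSP structure \emph{inside} each piece $P$ (size $O(r)$), so it only returns distances ``restricted to a single piece,'' and you therefore still have to glue pieces together with a global FR-Dijkstra over the dense distance graph of \emph{all} $O(n/r)$ pieces. As you computed, that global pass touches $\tilde O((n/r)\sqrt r)=\tilde O(n/\sqrt r)$ boundary vertices, and with $r=\Theta(n^2/S)$ this is $\tilde O(\sqrt S)$, not $\tilde O(n/\sqrt S)$. For any $S>n$ this is the wrong direction: more space makes your query \emph{slower}. Your subsequent attempt to repair this with ``portal pieces'' and a re-balanced recursion is left entirely unspecified, and no choice of $r_i$ fixes the fact that a global FR-Dijkstra over the whole $r$-division is the bottleneck.

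The paper's construction sidesteps the global FR-Dijkstra altogether, and the crucial move is precisely the one your proposal never makes: for each piece $P$, build the cycle MSSP structure of Theorem~\ref{thm:cyclemssp} for the hole boundary cycles of $P$ \emph{on the entire graph $G$}, not on $P$. This is exactly why one needs a \emph{cycle} MSSP rather than Klein's face MSSP --- the query node can be anywhere in $G$ and shortest paths to the cycle may cross it. Each hole boundary has $c=O(\sqrt r)=O(\sqrt n)$ nodes, so the structure occupies $O(n\lg\lg c)=O(n\lg\lg n)$ space and answers ``distance in $G$ from a query node to all of $C$'' in $\tilde O(\sqrt r)$ time. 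Setting $r=n^2\lg\lg n/S$, the total space over the $O(n/r)$ pieces is $O\bigl((n/r)\cdot n\lg\lg n\bigr)=O(S)$, and the preprocessing is $O\bigl((n/r)\cdot n\lg^3 n\bigr)=O(S\lg^3 n/\lg\lg n)$. At query time for $(s,t)$ with $s\in P$, you make a single query with $t$ (and one with $s$ on the reverse graph) to the cycle MSSP structure of $P$; this directly yields $d_G(t,p)$ and $d_G(s,p)$ for all $p\in\partial P$ in $O(\sqrt r\lg^2 n\lg\lg n)=O(nS^{-1/2}\lg^2 n\lg^{3/2}\lg n)$ time, and you then minimize $d_G(s,p)+d_G(p,t)$ over $p\in\partial P$. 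No Dijkstra over the global dense distance graph is performed. The only remaining case, $s$ and $t$ in the same piece with the shortest path avoiding $\partial P$, is handled by also storing for each piece a compact oracle (the linear-space oracle of Theorem~\ref{thm:linearspace} with $\epsilon=1/\lg r$, costing $O(r\lg\lg r)$ space per piece). Your worries about nested recursion levels multiplying the space are misplaced here: the tradeoff theorem itself uses a single flat $r$-division, and the only recursion is encapsulated inside Theorems~\ref{thm:linearspace} and~\ref{thm:cyclemssp}, which are invoked as black boxes.
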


\newpage 
As a corollary, we obtain the following result on $k$--many distances~\cite{conf/wg/FeuersteinM91}, 
improving upon an algorithm of Cabello~\cite{conf/soda/Cabello06}, 
which runs in time $\lgO((kn)^{2/3} + n^{4/3})$. 
Our result is an improvement for $k=\tilde o(n)$.
For the range roughly $k\in[\sqrt n,n)$, our algorithm is faster by a factor of $\lgO((n/k)^{2/3})$.

\begin{theorem}
Let $G$ be a directed planar graph on $n$ vertices.
The distances between $k=\Omega(n^{1/2}\lg n/\lg\lg n)$ pairs of nodes $(s_1,t_1),(s_2,t_2),\dots (s_k,t_k)$ 
can be computed in time $O((kn)^{2/3}(\lg n)^{7/3}(\lg\lg n)^{2/3})$. 
\label{thm:kmanydist}
\end{theorem}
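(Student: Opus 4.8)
The plan is to derive Theorem~\ref{thm:kmanydist} directly from the distance-oracle tradeoff of Theorem~\ref{thm:general}: build a single oracle, answer each of the $k$ given pairs with one query, and choose the space budget $S$ so as to balance the preprocessing cost against the total query cost, polylogarithmic factors included. First I would note that we may assume $k=O(n^2)$, since there are only $O(n^2)$ distinct pairs; for larger $k$ one simply computes all-pairs distances with the $S=n^2$ oracle in $\lgO(n^2)$ time, which is within the claimed bound.

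With space budget $S$, Theorem~\ref{thm:general} gives preprocessing time $O(S\lg^3 n/\lg\lg n)$ and query time $O(nS^{-1/2}\lg^2 n\,\lg^{3/2}\lg n)$, so computing all $k$ distances costs
\[
O\!\left(S\,\frac{\lg^3 n}{\lg\lg n}\;+\;kn\,S^{-1/2}\lg^2 n\,\lg^{3/2}\lg n\right).
\]
Equating the two terms yields $S^{3/2}=\Theta\!\left(kn\,\lg^{5/2}\lg n/\lg n\right)$, i.e.\ I would take
\[
S \;=\; \Theta\!\left((kn)^{2/3}\,\frac{\lg^{5/3}\lg n}{\lg^{2/3} n}\right).
\]
Substituting this value back, both terms collapse to $O\!\left((kn)^{2/3}(\lg n)^{7/3}(\lg\lg n)^{2/3}\right)$, which is exactly the asserted running time.

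The one place that requires care is checking that this choice of $S$ lies in the admissible range $[\,n\lg\lg n,\;n^2\,]$ demanded by Theorem~\ref{thm:general}. The upper bound holds because, under $k=O(n^2)$, we have $(kn)^{2/3}=O(n^2)$ while $\lg^{5/3}\lg n/\lg^{2/3}n=o(1)$. For the lower bound, a short rearrangement shows that $S\ge n\lg\lg n$ is equivalent to $k=\Omega(n^{1/2}\lg n/\lg\lg n)$ --- precisely the hypothesis of the theorem --- so under that hypothesis the construction is legitimate and the stated time bound follows. I do not expect a genuine obstacle here: the result is a direct application of Theorem~\ref{thm:general}, and the only ``hard'' part is the bookkeeping of the polylogarithmic factors together with the observation that the balancing value of $S$ becomes admissible exactly at the stated threshold for $k$.
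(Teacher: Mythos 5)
Your proposal is correct and takes essentially the same approach as the paper: build a single oracle from Theorem~\ref{thm:general}, query it $k$ times, and balance preprocessing against total query cost (the paper parameterizes by the block size $r=n^2\lg\lg n/S$ and sets $r = n^{4/3}k^{-2/3}(\lg n/\lg\lg n)^{2/3}$, which is exactly your choice of $S$ expressed in the other variable). Your explicit check that the admissible lower bound $S\ge n\lg\lg n$ is precisely the hypothesis $k=\Omega(n^{1/2}\lg n/\lg\lg n)$ matches the paper's remark that $r=O(n)$ under the same hypothesis.
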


We also give a data structure that keeps the space 
requirements as small as possible, i.e.~linear in the size of the
input. This is  the first linear-space data 
structure with provably {\em sublinear} query time for exact point-to-point
shortest-path queries. Nussbaum~\cite{Nussbaum11} has simultaneously obtained a similar result.

\def\linearspacethm{For any directed planar graph $G$ with non-negative arc lengths and for any constant $\epsilon>0$, 
there is a data structure that supports 
exact distance queries in $G$ with the following properties: 
the data structure can be created in time $O(n\lg n)$, 
the space required is $O(n)$, and 
the query time is $O(n^{1/2+\epsilon})$.}

\begin{theorem}
\linearspacethm
\label{thm:linearspace}
\end{theorem}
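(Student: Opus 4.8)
The goal is a linear-space oracle with query time $O(n^{1/2+\epsilon})$. The plan is to combine the tradeoff result of Theorem~\ref{thm:general} with a recursive decomposition so that the total space stays linear. Observe that plugging $S = n$ into Theorem~\ref{thm:general} already gives $O(n)$ space but only $\tilde O(\sqrt n)$ query time, which is $n^{1/2+o(1)}$ but not literally $n^{1/2+\epsilon}$ for a given $\epsilon$; more importantly, the cleanest way to reach a genuinely linear-space, polynomially-sublinear oracle is a recursive application of the core ``distances to a cycle'' data structure advertised in the abstract. So the approach is: use a recursive $r$--division (a decomposition of the planar graph into $O(n/r)$ pieces each of size $O(r)$ with $O(\sqrt r)$ boundary nodes, whose boundary lies on $O(1)$ faces), and for each piece store the cycle-distance oracle for its boundary cycles. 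A query from $u$ to $v$ is answered by locating the pieces containing $u$ and $v$, recursing when they are in the same piece, and otherwise combining distances from $u$ to the boundary of its piece, distances across the ``skeleton'' graph on the union of all boundary nodes, and distances from the boundary of $v$'s piece to $v$.

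First I would set up the recursion carefully. At the top level, take an $r$--division with $r = n^{\delta}$ for a suitable constant $\delta$ depending on $\epsilon$; the boundary set has size $O(n/\sqrt r) = O(n^{1-\delta/2})$. For each piece $P$, store the cycle-distance data structure of the abstract (size $O(|P|\lg\lg\sqrt{|P|}) = O(r\,\mathrm{polylog})$, query time $\lgO(\sqrt r)$) so that from any query node inside $P$ we can read off distances to all boundary nodes of $P$ in $\lgO(\sqrt r)$ time. The union of these structures costs $O(n\,\mathrm{polylog})$ space, which is slightly superlinear, so the second step is to shave the polylog: recursively apply the same construction inside each piece, stopping the recursion at a constant number of levels $\ell = \Theta(1/\epsilon)$ so that the bottom-level pieces have size $n^{\epsilon'}$ for a small $\epsilon'$, at which point we can afford a brute-force / BFS-based query. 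Choosing the branching so that each level multiplies space by a $\lg\lg$-type factor and there are $O(1)$ levels keeps the space at $O(n)$ after absorbing constants, and the query walks down $O(1/\epsilon)$ levels, each contributing a $\lgO(\sqrt{r})$ term plus a shortest-path computation on a skeleton graph of size $O(n^{1-\delta/2})$; using FR-style dense-distance-graph / Monge-matrix shortest paths on the skeleton, that computation costs $\lgO(n^{1-\delta/2})$, and balancing the two contributions gives the claimed $O(n^{1/2+\epsilon})$.

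The skeleton step needs care: between boundary nodes of different pieces we must compute shortest paths through the whole graph, not just one piece. The standard fix is to precompute, for each piece, the all-pairs distances among its own boundary nodes (a dense distance graph / Monge matrix of size $O(\sqrt r \times \sqrt r)$ per piece, hence $O(n)$ total when $r$ is chosen appropriately), and at query time run a Dijkstra-like search over the union of these dense distance graphs using the FR / Klein efficient-Dijkstra machinery so that the per-query cost is $\lgO(\sqrt{n\cdot(\text{number of pieces})}) = \lgO(n^{1-\delta/2})$ rather than quadratic in the boundary size. I would import this as essentially Theorem~\ref{thm:general} specialized to the skeleton, or rebuild it directly.

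The main obstacle I expect is the space accounting across the recursion: the cycle-distance structure has an $O(n\lg\lg c)$ size, and naively nesting it $O(1/\epsilon)$ times multiplies the space by $(\lg\lg n)^{O(1/\epsilon)}$, which is $n^{o(1)}$ but not $O(n)$. Getting to exactly linear space requires either (i) a sharper bound on the cycle-distance structure when the cycle is short relative to the piece, or (ii) arranging the recursion so that only the bottom level stores the full structure while intermediate levels store only the compact dense distance graphs, and then showing the query can still be threaded through. Making (ii) work — in particular proving that a query node's distances to an intermediate-level boundary can be recovered in $\lgO(\sqrt r)$ time without having stored the cycle oracle at that level, by instead recursing one more level down and stitching — is the delicate combinatorial point, and it is where most of the proof effort will go. The preprocessing-time bound $O(n\lg n)$ then follows because each level's construction (the $r$--division, the per-piece dense distance graphs, and the bottom-level oracles) runs in near-linear time and there are $O(1)$ levels.
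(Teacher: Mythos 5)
Your proposal has the right general flavor---recursive $r$--division, dense distance graphs, and an FR-Dijkstra query---but it contains a circular dependency and gets the key parameters wrong. The circularity: you propose to build the linear-space oracle on top of the ``cycle-distance'' data structure (Theorem~\ref{thm:cyclemssp}), but in the paper the dependence runs the other way: the cycle MSSP structure is constructed \emph{using} the linear-space oracle of Theorem~\ref{thm:linearspace} (see preprocessing step~\ref{Precursicve} in Section~\ref{sec:cyclemssp}). Theorem~\ref{thm:linearspace} is proved directly from Klein's (face-based) MSSP and FR-Dijkstra alone, storing only the DDGs at each level, so there is no $\lg\lg$ blowup to shave and no need for the delicate option~(ii) you sketch. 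Your space accounting for nested cycle-MSSP structures is therefore solving a problem the paper never encounters.

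The second gap is in the query structure and parameter choices. You propose a recursion of depth $\Theta(1/\epsilon)$ with a uniform $r=n^\delta$ per level, and a query that runs over ``the skeleton graph on the union of all boundary nodes.'' The union of \emph{all} boundary nodes at a level with $r=n^\delta$ has $\Theta(n^{1-\delta/2})$ vertices, so balancing $n^{\delta/2}$ against $n^{1-\delta/2}$ forces $\delta=1$, i.e.\ no decomposition at all; the balancing does not close. The paper avoids this by restricting FR-Dijkstra to only the DDGs of pieces containing $u$ \emph{and their immediate subpieces} across a chain of $k=\Theta(\lg(1/\epsilon))$ levels (not $\Theta(1/\epsilon)$), with a non-uniform, adaptively chosen sequence $r_i=n^{1-2^{-(i+1)}}$, $r_0=\sqrt n$. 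This makes the number of boundary nodes visited $O\bigl(\sum_i r_i/\sqrt{r_{i-1}}\bigr)=O(k\sqrt n + n^{1/2+2^{-(k+1)}})$, which is $O(n^{1/2+\epsilon})$, while space is simply $O(kn)=O(n)$. The adaptive sequence is exactly the ``delicate combinatorial point'' you flag but do not supply; without it the proposal does not yield the claimed bound.
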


More generally, our data structure works for the range $S\in[n,n\lg\lg n]$ (using a non-constant $\epsilon$, 
see details in Section~\ref{sec:linearspace}), where we slightly improve upon the tradeoff obtained by 
the construction of Fakcharoenphol and Rao~\cite{journals/jcss/FakcharoenpholR06}. 
Combined with Theorem~\ref{thm:general}, we provide efficient distance oracles for {\em any} space bound~$S\in[n,n^2]$. 
%\rem{does this sentence make sense? 1) I think it is ``interesting'' that the tradeoff changes at the point $n\lg\lg n$ and 2) we may consider mentioning the improvement upon FR}

The main techniques we use are Frederickson's $r$--division~\cite{journals/siamcomp/Frederickson87}, 
Fakcharoenphol and Rao's implementation of Dijkstra's algorithm~\cite{journals/jcss/FakcharoenpholR06}, 
and Klein's Multiple-Source Shortest Paths (MSSP) data structure~\cite{conf/soda/Klein05}, for which we propose 
a more space-efficient alternative (see Section~\ref{sec:cyclemssp} for a detailed comparison) as follows.

\def\cyclemssp{Given a directed planar graph $G$
on $n$ nodes and a simple cycle $C$ with $c = O(\sqrt n)$ nodes, there is an algorithm that preprocesses $G$
in $O(n \lg^3 n)$ time to produce a data structure of size $O(n \lg\lg c)$
that can answer the following queries in $O(c \lg^2c \lg\lg c)$ time:
for a query node $u$, output the distance from $u$ to all the nodes of $C$.}

\begin{theorem}
\cyclemssp
\label{thm:cyclemssp}
\end{theorem}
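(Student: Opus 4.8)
The plan is to reduce the cycle-to-all-nodes query to a bounded number of face-boundary queries answered by Klein's MSSP data structure, while keeping the total space near-linear by using the fact that $C$ has only $c = O(\sqrt n)$ nodes. First I would use the cycle $C$ as a separator: $C$ partitions the plane into the region $G_{\mathrm{in}}$ enclosed by $C$ and the region $G_{\mathrm{out}}$ outside of it. A query node $u$ lies in exactly one of these regions, and a shortest path from $u$ to a node $v \in C$ either stays within $u$'s region or crosses $C$. If it crosses $C$, it passes through some vertex of $C$, so $\dist(u,v) = \min_{x \in C}\big(\dist_{\mathrm{reg}(u)}(u,x) + \dist(x,v)\big)$ where $\dist_{\mathrm{reg}(u)}$ is distance restricted to $u$'s side. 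This decomposition shows it suffices to (a) compute, for each query $u$, the $c$-vector of distances from $u$ to $C$ \emph{within} $u$'s region, and (b) have precomputed the $c \times c$ matrix $D_C$ of all-pairs distances among nodes of $C$ in the full graph $G$. Since $c = O(\sqrt n)$, storing $D_C$ costs $O(c^2) = O(n)$ space, and given the within-region vector, combining it with $D_C$ via the distance-product step takes $O(c^2)$ time, which can be reduced to $O(c\lg^2 c\lg\lg c)$ by exploiting the Monge/bitonic structure of distances along the cycle (this is where the Fakcharoenphol--Rao machinery and Monge-matrix search enter).

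Next I would handle step (a), the within-region query. Consider $G_{\mathrm{in}}$: it is a planar graph with $C$ on its outer face, so Klein's MSSP structure built on $G_{\mathrm{in}}$ with sources on the face $C$ answers, for any node $u$ in $G_{\mathrm{in}}$, distances from $u$ to all of $C$ --- but Klein's structure as stated reports distances \emph{from} a moving source on a face \emph{to} all nodes. I would instead build it on the reverse graph, or equivalently query distances from the cycle to $u$ and separately from $u$ to the cycle by building two MSSP structures (one on $\overleftarrow{G_{\mathrm{in}}}$, one on $G_{\mathrm{in}}$), each rooted at the face bounded by $C$. Each such structure has size $O(|G_{\mathrm{in}}|) = O(n)$ and is built in $O(n\lg n)$ time; a single query (advancing the source around the whole face $C$ and reading off the value at $u$) takes $O(c\lg n)$ time. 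The same is done for $G_{\mathrm{out}}$. So the within-region vector for any $u$ is obtained in $\lgO(c)$ time, and the overall space is $O(n)$ --- we do not actually need the $O(n\lg\lg c)$ bound for this part; that looser bound absorbs lower-order terms and the representation of $D_C$ and the persistence structures.

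The remaining piece is precomputing $D_C$ and the auxiliary data to make the combining step fast. To compute all pairwise distances among the $c$ nodes of $C$ in $G$: run MSSP rooted at $C$ on $G_{\mathrm{in}}$ and on $G_{\mathrm{out}}$, read off the $c\times c$ matrices of within-region $C$-to-$C$ distances, and then take the min-plus closure combining the two, which is a single distance-product on $c\times c$ Monge-structured matrices and runs in $\lgO(c)$ time. This dominates nothing; the preprocessing is dominated by the $O(n\lg^3 n)$ term coming from building Klein's MSSP structures (the $\lg^3 n$ accounting for the Fakcharoenphol--Rao dynamic-tree operations used inside MSSP). The query time $O(c\lg^2 c\lg\lg c)$ is the cost of one MSSP query pass around $C$ plus one Monge distance-product combination.

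The main obstacle I expect is not the high-level reduction but the bookkeeping of \emph{directions and regions} together with exploiting monotonicity: a shortest path from $u$ to $v\in C$ may weave between $G_{\mathrm{in}}$ and $G_{\mathrm{out}}$ many times, and the clean two-term decomposition above requires arguing that it suffices to consider the \emph{first} crossing of $C$ and to treat the rest as a $C$-to-$v$ distance in the full graph --- this needs care because $v$ itself is on $C$ and ``crossing'' must be defined relative to the first time the path touches $C$. Getting the Monge/bitonic property of the relevant matrices right (so that the $O(c^2)$ naive combine becomes $\lgO(c)$) is the other delicate point, and it is exactly the place where this theorem leans on the structural properties of shortest paths with respect to a simple cycle in a planar graph that underlie the Fakcharoenphol--Rao and Klein approaches.
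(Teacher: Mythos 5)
Your high-level decomposition (handle $G_{\mathrm{in}}$ and $G_{\mathrm{out}}$ separately, pre\-com\-pute $C$-to-$C$ distances, combine) is sound and echoes what the paper does, but your implementation has a fatal flaw that undermines the central point of the theorem. You claim that Klein's MSSP data structure built on $G_{\mathrm{in}}$ ``has size $O(|G_{\mathrm{in}}|) = O(n)$.'' This is incorrect: Klein's MSSP occupies $O(n\lg n)$ space (it is a persistent dynamic-tree structure; each of the $\Theta(n)$ rotations that occur as the source moves around the face leaves behind $O(\lg n)$ persisted pointers). The paper states this explicitly in Section~2.2. Consequently your construction yields an $O(n\lg n)$-space data structure, not the $O(n\lg\lg c)$ the theorem claims --- and getting \emph{below} $O(n\lg n)$ is precisely the point. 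The paper's introduction and the ``Comparison with Klein's MSSP'' paragraph of Section~4 explicitly present the cycle MSSP structure as a \emph{more space-efficient alternative} to storing Klein's MSSP; if simply storing two Klein structures sufficed, the theorem would be both weaker than and subsumed by Klein's result (aside from the cycle-vs-face issue), and the whole of Section~5 would collapse to the older $S \geq n\lg n$ regime.

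The paper's actual proof avoids storing any MSSP structure at query time. Instead, it computes an $r$--division of each of $G_{\mathrm{in}}$ and $G_{\mathrm{out}}$ with $r = c^2$, treats all of $C$ as boundary of every piece, builds on each piece the \emph{recursive} $r$--division of the linear-space oracle from Section~3 (with $k = \Theta(\lg\lg c)$ levels, which is where the $\lg\lg c$ factor enters), stores only the dense distance graphs $H_P$ and the \emph{external} dense distance graphs $DDG_{G_i - P}$, and discards all intermediate MSSP computations. At query time it runs FR-Dijkstra on the union of $H_0$, $H_u$, $DDG_{G_0-P}$, and $DDG_C$, with $O(c\lg\lg c)$ total boundary nodes, giving query time $O(c\lg^2 c\lg\lg c)$. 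Your ``Monge distance-product'' combining step is also not what the paper does --- the combination of within-piece, outside-piece, and across-$C$ information happens implicitly inside a single FR-Dijkstra run over the dense distance graphs, not as a standalone min-plus product of a vector with a $C$-to-$C$ matrix. So while your plan captures the right Jordan-curve intuition, it misses the essential technical device (reuse the Section~3 recursive $r$--division machinery in place of stored MSSP structures) that makes the space bound achievable.
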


Since Klein's MSSP data structure has found numerous applications, 
we believe that our data structure could be a rather useful tool in other algorithms as 
well.\footnote{There is at least one application for approximate distance oracles~\cite{journals/corr/abs-1109-2641}.} 

The shortest-path query problem has been investigated heavily from an experimental perspective as well. 
Experimental results suggest that query times proportional to the shortest-path length are possible in practice 
using algorithms based on the so-called {\em arc-flag} technique~\cite{lauther04,conf/wea/KohlerMS05,ArcflagsDimacs}.\footnote{
The preprocessing algorithm of this technique first partitions the graph into regions $V_i$ and thereafter labels each edge $e$ for all regions $V_i$ with a boolean {\em flag} $sp_i(e)$ indicating whether $e$ lies on any shortest path to any node in $V_i$. At query time, only edges leading towards the target region need to be considered. } 
Hilger, K\"ohler, M\"ohring, and Schilling make a statement about the (experimental) worst-case behavior of their method: 
\begin{quote}
In all cases, the search space of our arc-flag method is never larger
than ten times the actual number of nodes on the shortest paths~\cite[Section~6]{ArcflagsDimacs}. 
\end{quote}
We can now actually {\em prove} a similar statement: 
\begin{quote}
In (provably) all cases, the search space of our method is never larger
than a polylogarithmic factor times the length $\ell$ of the shortest paths.
\end{quote}

\def\pathlengthquerytime{For any planar graph $G$ with edge lengths $\geq1$ there is an exact distance oracle using space $O(n\lg n\lg\lg n)$ with 
query time $O(\min\left\{\ell\lg^2\ell\lg\lg\ell,\sqrt{n}\lg^2n\right\})$ for any pair of nodes at distance $\ell$. 
The preprocessing time is bounded by~$O(\spherecut)$ for any constant $\epsilon>0$.}

More precisely:
\begin{theorem}
\pathlengthquerytime
\label{thm:pathlengthquerytime}
\end{theorem}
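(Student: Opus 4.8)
The plan is to split into two regimes according to the (unknown) distance~$\ell$. For the easy regime $\ell\gtrsim\sqrt n$ there is nothing new to do: instantiate the tradeoff oracle of Theorem~\ref{thm:general} with $S=\Theta(n\lg n\lg\lg n)$, which fits the claimed space budget and answers \emph{every} query in $O(\sqrt n\lg^2 n)$ time. So from now on I focus on pairs with $\ell<\sqrt n$, where the goal is query time $\lgO(\ell)$, and the idea is to build a hierarchy of purely \emph{local} structures and fall back on the global oracle only once $\ell$ proves too large.

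The local structures are organized by scale. For each $\rho=2^i$ with $i=0,1,\dots,\lceil\tfrac12\lg n\rceil$ I would fix a \emph{sparse cover} of $G$: a family of clusters, each a connected vertex set of radius $O(\rho)$ in the graph metric, such that (i) every ball $B(u,\rho)$ lies inside some cluster, and (ii) each vertex belongs to only $O(1)$ clusters; such covers exist for planar (indeed minor-closed) graphs and are computable in near-linear time, and (ii) gives $\sum_K\abs K=O(n)$ per scale. The point of the hypothesis that all edge lengths are $\ge1$ is that a cluster $K$ of radius $O(\rho)$ then admits a separator of only $O(\rho)$ vertices, not the generic $O(\sqrt{\abs K})$: a shortest-path forest rooted at the current piece's boundary has depth $O(\rho)$ (the prefix of a shortest path to the center, truncated at its first exit of the current piece, stays inside that piece and uses $O(\rho)$ edges), so every fundamental-cycle separator arising in the recursive decomposition of $K$ has $O(\rho)$ vertices. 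For each such separator cycle $C$ I would store, \emph{inside the relevant piece}, the cycle-MSSP data structure of Theorem~\ref{thm:cyclemssp} (note $\abs C=O(\rho)=O(\sqrt{\abs K})$ when $\rho^2\le\abs K$, as the theorem requires; when $\rho^2>\abs K$ the cluster can be handled more crudely). This should yield, for each cluster, an internal oracle answering a fixed-source-to-target query in $\lgO(\rho)$ time in $\lgO(\abs K)$ space; summing over the $O(\lg n)$ scales and all clusters targets $O(n\lg n\lg\lg n)$ space, and the preprocessing (covers, separators, one cycle-MSSP build of cost $O(\abs K\lg^3\abs K)$ per cycle) is $\lgO(n)=O(\spherecut)$.

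The query on $(u,v)$ then iterates $i=0,1,2,\dots$: let $K_i$ be a scale-$2^i$ cluster containing $B(u,2^i)$; if $v\notin K_i$ set $\delta_i=+\infty$, otherwise compute $\delta_i:=\dist_{K_i}(u,v)$ via $K_i$'s internal oracle; stop and output $\delta_i$ as soon as $\delta_i\le 2^i$. If $i$ reaches $\lceil\tfrac12\lg n\rceil$ without stopping, return instead the answer of the global oracle. Correctness is immediate from $\dist_{K_i}(u,v)\ge\dist_G(u,v)$ (distances in a subgraph can only grow): if we output $\delta_i$ then $\dist_G(u,v)\le\delta_i\le 2^i$, so the shortest path has $\le 2^i$ edges and lies in $B(u,2^i)\subseteq K_i$, whence $\delta_i=\dist_{K_i}(u,v)=\dist_G(u,v)$; the loop must stop by $i=\lceil\lg\ell\rceil$, where $\delta_i=\ell\le 2^i$; and not stopping through $i=\lceil\tfrac12\lg n\rceil$ certifies $\dist_G(u,v)>\sqrt n$, so the fallback is legitimate. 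The running time is $\sum_{i\le\lceil\lg\ell\rceil}\lgO(2^i)=\lgO(\ell)$ (a geometric sum dominated by its last term, which gives $O(\ell\lg^2\ell\lg\lg\ell)$ from the cycle-MSSP query bound), or $O(\sqrt n\lg^2 n)$ in the fallback — i.e.\ $O(\min\set{\ell\lg^2\ell\lg\lg\ell,\sqrt n\lg^2 n})$.

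The main obstacle is the per-cluster oracle: achieving $\lgO(\rho)$ query \emph{and} $\lgO(\abs K)$ space simultaneously forces the recursion inside $K$ to keep every separator of size $O(\rho)$ and to be shallow enough (in the number of distinct separator scales) that the space stays $O(n\lg n\lg\lg n)$ and the query polylogarithms come out as stated — which is non-trivial precisely because a cluster of radius $\rho$ may still have $\Omega(\abs K)\gg\rho^2$ vertices, so one cannot simply recurse down to pieces of size $\rho^2$. Pinning down this recursion, reconciling it with the $c=O(\sqrt{\cdot})$ hypothesis of Theorem~\ref{thm:cyclemssp}, and checking that the distances computed inside clusters coincide with $\dist_G$ for exactly the pairs that matter, is the delicate part of the proof, and it is here that the edge-length-at-least-$1$ assumption is indispensable.
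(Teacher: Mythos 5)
Your high-level plan is the right one and closely parallels the paper's: for each scale $\rho = 2^i$, cover the graph with low-tree-width regions and build a local distance oracle, then find the right scale at query time and fall back on the $S = \Theta(n\lg n\lg\lg n)$ tradeoff oracle once $\ell \gtrsim \sqrt n$. Your correctness argument for the stopping rule $\delta_i \le 2^i$ (using edge lengths $\ge 1$ to convert a length bound into a hop/radius bound) is sound, and iterating from $i = 0$ with a geometric sum is a legitimate alternative to the paper's use of Thorup's $(1+\epsilon)$-approximate oracle to jump directly to the right scale. The paper implements the covering step differently — BFS slicing (Baker/Eppstein/Demaine--Hajiaghayi local tree-width) rather than sparse covers — but both give overlapping regions of tree-width $O(\rho)$ with $O(1)$ overlap per vertex, so this is a matter of taste.

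The genuine gap is the per-cluster oracle, and you flag it yourself: ``achieving $\lgO(\rho)$ query \emph{and} $\lgO(\abs K)$ space simultaneously forces the recursion inside $K$ to keep every separator of size $O(\rho)$ \dots\ which is non-trivial precisely because a cluster of radius $\rho$ may still have $\Omega(\abs K) \gg \rho^2$ vertices.'' That sentence is exactly where the paper's actual work lives, and it is not a routine detail. What is needed is a distance oracle for a planar graph on $m$ nodes of tree-width $w = O(\rho)$ with space $\lgO(m)$ and query time $\lgO(w)$ — this is Theorem~\ref{thm:treewgeneral}. Proving it requires two ingredients you do not supply: (i) an efficient algorithm to find \emph{non-self-crossing cycle} separators of size $O(w)$ — the paper invokes sphere-cut branch decompositions via Gu--Tamaki (Lemma~\ref{lemma:gutamaki}); a BFS-rooted fundamental-cycle separator of depth $O(\rho)$ is the right intuition but does not on its own give the $[r,w]$-division you need; and (ii) an $[r,w]$-division (Lemma~\ref{lemma:spherecutrdiv}) where the total number of boundary nodes is $O(mw/r)$ rather than the $O(m/\sqrt r)$ of a Frederickson $r$-division — this tighter bound is what makes the space accounting close and what makes $c = O(\sqrt{\text{piece size}})$ in Theorem~\ref{thm:cyclemssp} compatible with separator size $O(\rho)$. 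Without (i) and (ii) the ``delicate part'' you name remains open, and the proof is incomplete; simply deploying cycle-MSSP on the cycles you describe does not resolve the space/query tension you correctly identify.
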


Note that our data structure 
has query time proportional to the path {\em length}. In fact, our algorithm maintains a {\em Bellman-Ford-type} invariant: after iteration $i$, 
the distance represents the minimum path length among all paths on $\lgO(i)$ edges --- the correct distance is computed after time roughly
 proportional to the minimum number of edges on a shortest path but we can only guarantee correctness after time $\lgO(\ell)$. 
If we may further assume that, for some constant $\bar\epsilon>0$, all $s--t$ paths of length at most $(1+\bar\epsilon)d_G(s,t)$ have $\Omega(h_G(s,t))$ edges (where $h_G(s,t)$ denotes the number of edges ({\em hops}) on a minimum-hop shortest-path), 
 then our data structure can be constructed to have query time 
proportional to the minimum number of edges on a shortest path $\lgO(h_G(s,t))$.
This assumption essentially means that any $s-t$ path with significantly 
fewer edges than the shortest path is much longer. 
%Graphs and weight functions considered in practice, in particular those in~\cite{ArcflagsDimacs}, appear to satisfy this assumption. \rem{I actually don't have evidence for this, let's remove the sentence?}

Our main contributions can be summarized as follows: 
{\em (i)} We improve the worst-case behavior of previously known distance oracles with low space requirements, 
providing a distance oracle with fast preprocessing algorithm that works for the whole tradeoff curve (we also provide the 
first one with linear space and sublinear query time), and 
{\em (ii)}  we make an important step towards proving the behavior observed in practice. As our main tool, 
{\em (iii)} we provide a more space-efficient multiple-source shortest-path data structure.

\subsection{Related Work.}
Shortest-path query processing for planar graphs has been studied extensively. 
In this section, we give a brief review of previous results. 
We focus on the space--query time tradeoff. 
See Figure~\ref{fig:tradeoff} for a summary of known results in comparison with ours. 
The tradeoffs previously had not been illustrated by a space vs.~query time plot as in Figure~\ref{fig:tradeoff}; indeed, 
the illustration suggests that a data structure like the one described in our main theorem (Theorem~\ref{thm:general}) 
was bound to exist. 
%%Efficient preprocessing times and the range $[n\lg\lg n,n\lg n)$ however require extra effort.  

\definecolor{uququq}{rgb}{0.25,0.25,0.25}
\definecolor{xdxdff}{rgb}{0.49,0.49,1}
\begin{figure}[h!]
\begin{center}
\begin{tikzpicture}[line cap=round,line join=round,>=triangle 45,x=7cm,y=5cm]
\draw[->,color=black] (-0.1,0) -- (1.1,0);
%\foreach \x in {,0.2,0.4,0.6,0.8,1}
%\draw[shift={(\x,0)},color=black] (0pt,2pt) -- (0pt,-2pt) node[below] {\footnotesize $\x$};
\draw[color=black] (0.333333333333333,2pt) -- (0.333333333333333,-2pt) node[below] {\footnotesize $4/3$};
\draw[color=black] (1,2pt) -- (1,-2pt) node[below] {\footnotesize $2$};
\draw[color=black] (0.5,2pt) -- (0.5,-2pt) node[below] {\footnotesize $3/2$};
\draw[color=black] (2pt,0.25) -- (-2pt,0.25) node[left] {\footnotesize $1/4$};
\draw[color=black] (2pt,0.333333333333333) -- (-2pt,0.333333333333333) node[left] {\footnotesize $1/3$};
\draw[color=black] (2pt,0.5) -- (-2pt,0.5) node[left] {\footnotesize $1/2$};
\draw[color=black] (2pt,1) -- (-2pt,1) node[left] {\footnotesize $1$};
\draw[->,color=black] (0,-0.1) -- (0,1.1);
\draw[color=black] (1,-0.15) node {$\lg S/\lg n$};
%\foreach \y in {,0.2,0.4,0.6,0.8,1}
%\draw[shift={(0,\y)},color=black] (2pt,0pt) -- (-2pt,0pt) node[left] {\footnotesize $\y$};
\draw[color=black] (0.12,1.05) node {$\lg Q/\lg n$};
%\draw[color=black] (0pt,-10pt) node[right] {\footnotesize $0$};
\clip(-0.1,-0.1) rectangle (1.1,1.1);
\draw (0,1)-- (1,0);
\draw[dashed] (0,0.5)-- (0.333333333333333,0.3333333333333);
%\draw (0.5,-0.1) -- (0.5,1.1);
%\draw (0.33,-0.1) -- (0.33,1.1);
\draw (0.333333333,0.33333333333)-- (0.5,0.25);
\draw (0.5,0.25)-- (1,0);
\fill [color=black] (0,0.5) circle (3pt);
\draw[color=black] (0.075,0.56) node {\cite{journals/jcss/FakcharoenpholR06}};
%\fill [color=xdxdff] (0,1) circle (1.5pt);
%\draw[color=xdxdff] (0.02,1.03) node {$B$};
%\fill [color=xdxdff] (1,0) circle (1.5pt);
%\draw[color=xdxdff] (1.02,0.03) node {$C$};
\draw[color=black] (0.45,0.82) node {\cite{DjidjevWG96}};
\fill [color=uququq] (0.5,0.5) circle (1.5pt);
\draw[color=black] (0.77,0.36) node {\cite{ArikatiCCDSZ96}};
%\fill [color=xdxdff] (0.5,0) circle (1.5pt);
%\draw[color=xdxdff] (0.52,0.03) node {$D$};
\fill [color=uququq] (0.5,0.25) circle (1.5pt);
%\draw[color=uququq] (0.52,0.28) node {$E$};
%\fill [color=xdxdff] (0.33,0) circle (1.5pt);
%\draw[color=xdxdff] (0.35,0.03) node {$F$};
\fill [color=uququq] (0.33333333333333,0.333333333333333) circle (1.5pt);
%\draw[color=uququq] (0.35,0.37) node {$G$};
\draw[color=black] (0.4,0.23) node {\cite{DjidjevWG96}};
\draw[color=black] (0.6,0.1) node {\cite{stoc/ChenX00,conf/soda/Cabello06}};
\draw[color=black] (0.15,0.37) node {Thm~\ref{thm:general}};
\end{tikzpicture}
\end{center}
\caption{Tradeoff of the Space [$S$] vs.~the Query time [$Q$] for different shortest-path query data structures on a doubly logarithmic scale, ignoring constant and logarithmic factors. The upper line represents the $Q=n^2/S$ tradeoff (completely covered by oracles of Djidjev~\cite{DjidjevWG96}; the oracles of Arikati, Chen, Chew, Das, Smid, and Zaroliagis~\cite{ArikatiCCDSZ96} cover the range $S\in[n^{3/2},n^2]$; SSSP ($S=n$) and APSP ($S=n^2$) also lie on this line). The lower line represents the $Q=n/\sqrt S$ tradeoff; the result of Djidjev~\cite{DjidjevWG96} covers the range $S\in[n^{4/3},n^{3/2}]$; Chen and Xu~\cite{stoc/ChenX00} and Cabello~\cite{conf/soda/Cabello06} extend this to $S\in[n^{4/3},n^2]$. The data structure of Fakcharoenphol and Rao~\cite{journals/jcss/FakcharoenpholR06} covers the point $S=n$ and $Q=\sqrt n$. We extend their results to the full range.\label{fig:tradeoff}}
\end{figure}
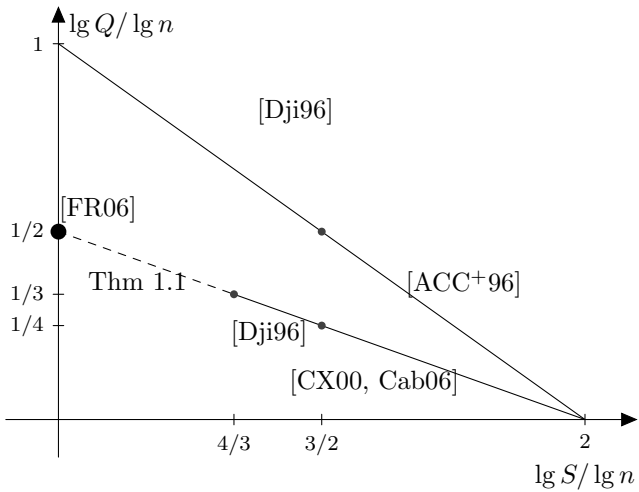

For exact shortest-path queries, the currently best result in terms of the tradeoff between space and query time is by Fakcharoenphol and Rao~\cite{journals/jcss/FakcharoenpholR06}. 
Their data structure of space $\lgO(n)$ can be constructed in time $\lgO(n)$ and processes queries in time $\lgO(\sqrt n)$. 
The preprocessing time can be improved by a logarithmic factor~\cite{journals/talg/KleinMW10}. 
(Applications to distance oracles are not explicitly stated in~\cite{journals/talg/KleinMW10}.) 
%
%\rem{MAYBE TAKE OUT OR PUT LATER: Note that, when choosing $\epsilon=1/\lg n$, our result (Theorem~\ref{thm:linearspace}) slightly  improves upon the product of space times query time in~\cite{journals/jcss/FakcharoenpholR06,journals/talg/KleinMW10}. }

Some distance oracles have better query times. 
Djidjev~\cite{DjidjevWG96}, improving upon earlier work of Feuerstein and Marchetti-Spaccamela~\cite{conf/wg/FeuersteinM91}, 
proves that, for any $S\in[n,n^2]$, there is an exact distance oracle with 
preprocessing time $O(S)$ (which increases to $O(n\sqrt S)$ for $S\in[n,n^{3/2}]$), space $O(S)$, and query time $O(n^2/S)$. 
For a smaller range, he also proves that, for any $S\in[n^{4/3},n^{3/2}]$, 
there is an exact distance oracle with preprocessing time $O(n\sqrt
S)$, space $O(S)$, and query time $\lgO(n/\sqrt{S})$. 
Chen and Xu~\cite{stoc/ChenX00}, extending the range, prove that, for any $S\in[n^{4/3},n^2]$, 
there is an exact distance oracle using space $O(S)$ with preprocessing time $O(n\sqrt S)$ and 
query time $\lgO(n/\sqrt S)$. 
Cabello~\cite{conf/soda/Cabello06}, mainly improving the preprocessing time, proves that, for any $S\in[n^{4/3},n^2]$, 
there is an exact distance oracle with preprocessing time and space $O(S)$ and query time $\lgO(n/\sqrt S)$. 
Compared to Djidjev's construction, the query time is slower by a logarithmic factor but the range for $S$ is larger and the preprocessing time is faster. 
Nussbaum~\cite{Nussbaum11} provides a data structure that maintains both the tradeoff from~\cite{DjidjevWG96,stoc/ChenX00} and the fast preprocessing time from~\cite{conf/soda/Cabello06}. 
Nussbaum also provides a different data structure with a ``clean'' tradeoff of space $O(S)$ and query time $O(n/\sqrt S)$, however spending time 
$\lgO(S^{3/2}/\sqrt n)$ in the preprocessing phase. 
In our construction, we sacrifice another root-logarithmic factor in the query time (compared to~\cite{conf/soda/Cabello06}) 
but we prove the bounds for essentially the whole range of~$S$; our preprocessing time remains~$\lgO(S)$.

If constant query time is desired,  storing a complete distance matrix is
essentially the best solution we have. Wulff-Nilsen~\cite{WNThesis} % I asked CWN and he told me to cite his thesis
recently improved the space requirements to $o(n^2)$. % yes, for directed, according to the claim on p. 10 of the thesis
If the space is restricted to linear, using the linear-time single-source shortest path algorithm of 
Henzinger, Klein, Rao, and Subramanian~\cite{journals/jcss/HenzingerKRS97} is the fastest known for exact shortest-path queries 
until the current work (Theorem~\ref{thm:linearspace}). 
Nussbaum~\cite{Nussbaum11} has simultaneously obtained a result similar to Theorem~\ref{thm:linearspace}.

Efficient data structures for shortest-path queries have also been devised for restricted classes 
of planar graphs~\cite{journals/algorithmica/DjidjevPZ00,stoc/ChenX00} and for restricted types
of queries~\cite{journals/jgaa/Eppstein99,journals/talg/KowalikK06,journals/siamcomp/Schmidt98,conf/soda/Klein05}. 
If approximate distances and shortest paths are sufficient, a better tradeoff with $\lgO(n)$ space and $\lgO(1)$ query time has been 
achieved~\cite{ThorupJACM04,conf/soda/Klein02,conf/soda/Klein05,KKS}.

Based on separators, geometric properties, and other characteristics such as highway structures, many efficient  practical methods have been devised~\cite{GeisbergerSSD08,SchultesSanders05,TransitNodesScience}, their time and space complexities are however difficult to analyze. 
Competitive worst-case bounds have been achieved 
under the assumption that actual road networks have small {\em highway dimension}~\cite{HighwayDimension,conf/icalp/AbrahamDFGW11}. 
While our preprocessing algorithm (Theorem~\ref{thm:pathlengthquerytime}) runs in almost linear time, 
some of the problems that appear in the preprocessing stage of practical route planning methods 
have recently been proven to be NP-hard~\cite{ShortcutProblem,conf/ciac/BauerCKKW10}.

%MSR method somehow similar to ours

%emphasize our construction time $\lgO(S)$

\pagebreak

\section{Preliminaries.}
\label{sec:prel}
\subsection{Recursive $r$--division of Planar Graphs.}
\label{sec:rdiv}
Let $G = (V,E)$ be a planar graph with $\abs{V} = n$. Let $E_P$ be a subset of $E$, and let $P = (V_P,E_P)$ be the subgraph of $G$ induced by
$E_P$. $P$ is called a \emph{piece} of $G$. The nodes of $V_P$ that
are incident in $G$ to nodes of $V \setminus V_P$ are called the
{\em boundary nodes} of $P$ and denoted by $\partial P$.

An {\em $r$--division}~\cite{journals/siamcomp/Frederickson87} of $G$ is a decomposition into $O(n/r)$
edge-disjoint pieces,
each with $O(r)$ nodes and $O(\sqrt{r})$ boundary nodes. We use an
$r$--division with the additional property that, in each piece,
there exist a constant number of faces, called {\em holes}, such that every
boundary node is incident to some hole. Such a decomposition can be
found in $O(n \lg r + nr^{-1/2}\lg n)$~\cite{nlglgn-mincut-WN10} by
applying Miller's cycle separator~\cite{journals/jcss/Miller86} iteratively.

We use this $r$--division recursively. 
Let the base of the recursion be level $0$, and let the top of the recursion be level~$k$.
The entire graph $G$ is defined to be the only piece at level~$k$. 
The pieces of level $i$ of the recursion are obtained by computing an
$r_i$--division for each level--$(i+1)$ piece. The notation $r_i$
suggests that we may use a different parameter $r$ in the
$r$--division at every level of the recursion. Indeed, using 
a non-uniform recursion is important in obtaining Theorem~\ref{thm:linearspace}. 
For a level--$i$ piece $P$, the level--$(i-1)$ pieces obtained by applying
the $r$--division to $P$ are called the \emph{subpieces} of $P$.

We stress that the
classification of nodes of a piece at \emph{any} level as boundary
nodes is with respect to $G$ (and not $P$). This implies that, if $v$ is a
boundary node of a level--$i$ piece, then $v$ is also a boundary node of
any lower-level piece that contains $v$.
This generalizes the decomposition used in~\cite{journals/jcss/FakcharoenpholR06}, where Miller's separator is used at each
level rather than an $r$--division.

\subsection{Klein's Multiple-Source Shortest Paths Algorithm.}\label{subsec:mssp}
Klein~\cite{conf/soda/Klein05} gave a multiple-source shortest path (MSSP) algorithm with the following properties.  
The input consists of a directed
planar embedded graph $G$ with non-negative arc-lengths, and a face $f$. For each node $u$ in turn on the boundary of $f$, the
algorithm computes (an implicit representation of) the shortest-path
tree rooted at $u$.  
This takes a total of $O(n \lg n)$ time and space. 
Subsequently, the distance between any  pair $(u,v)$ of
nodes of $G$ where  $u$ is on the boundary of $f$ (and $v$ is an arbitrary node), can be queried in $O(\lg n)$ time.
%If the set of queries is known in advance, then the space requirement is $O(n)$.
In particular, given a set $S$ of $O(\sqrt{n})$ nodes on the boundary
of a single face, the algorithm can compute all $S$-to-$S$ distances
in $O(n \lg n)$ time. % and $O(n)$ space.

We propose a more space-efficient alternative in Section~\ref{sec:cyclemssp}, 
wherein we also provide a detailed comparison. 

\subsection{Dense Distance Graphs and Efficient Implementation of
  Dijkstra's Algorithm.}\label{subsec:ddgdijkstra}

The {\em dense distance graph} for a piece $P$, denoted by $DDG_P$, is the
complete graph on $\partial P$, the
boundary nodes of $P$, such that the length of an arc corresponds to
the distance (in $P$) between its endpoints. The dense distance graph
for all pieces $P$ in the $r$--division can be computed in $O(\abs{V(G)} \lg \abs{V(G)})$ time and
space using Klein's MSSP; For each piece $P$, compute MSSP data structures in
$O(\abs{P} \lg \abs{P})$ time and space a constant number of times,
specifying a different hole of $P$ as the distinguished face at each
run. Then query the MSSP data structures for the distances between
the boundary nodes in $O(\abs{\partial P}^2 \lg \abs{P})$ time. Since in
  an $r$--division $\abs{\partial P} = \sqrt{\abs{P}}$, this takes
  $O(\abs{P} \lg \abs{P})$ time and space per piece, for a total of
  $O(\abs{V(G)} \lg \abs{V(G)})$ for all pieces.

Let $\PP$ be a set of pieces (not necessarily at the same level), 
and let the graph $H$ be the union of the dense distance graphs of the pieces in $\PP$.
Fakcharoenphol and Rao~\cite{journals/jcss/FakcharoenpholR06} devised an 
ingenious implementation of Dijkstra's algorithm~\cite{Dijkstra59} 
that computes a shortest-path tree in $H$ in time $O(\abs{V(H)}\lg^2n)$, where
$\abs{V(H)}$ is the number of nodes in $H$ (i.e., the total number of boundary
 nodes in all the pieces in $\PP$). We will refer to this
 implementation as {\em FR-Dijkstra}.

In fact, the proof of Fakcharoenphol and Rao's algorithm only relies on the
property that the distances in each of the dense distance graphs given
as input correspond to distances in a planar graph between a set of
nodes that lie on a constant number of faces. It does not rely on any
other properties of the $r$--division.

FR-Dijkstra can be extended
to the following setting (cf.~\cite{conf/focs/BorradaileSW10}). Let $J$ be a planar graph. Let $n'$ denote
the number of nodes of $J \cup H$.
We can compute shortest paths in $H \cup J$ in $O(\abs{H} \lg^2
\abs{H} + n' \lg n')$ time. The edges of $H$ are relaxed using
FR-Dijkstra, while the edges of
$J$ are relaxed as in a traditional implementation of Dijkstra's
algorithm using a heap.

\subsubsection{External Dense Distance Graphs.}\label{sec:extddg}
For a piece $P$, let $G-P$ be the graph obtained from $G$ by deleting
the nodes in $V_P - \partial P$. 
The {\em external} dense distance graph of $P$,  denoted by $DDG_{G-P}$, is the
complete graph on $\partial P$ such that the length of an arc corresponds to
the distance between its endpoints in $G- P$. External dense
distance graphs were used recently in~\cite{conf/focs/BorradaileSW10}. Computing the
external dense distance graphs for all pieces in an $r$--division
cannot be done efficiently using Klein's MSSP. The reason is that
$\abs{V(G-P)}$ can be $\Theta(\abs{V(G)})$ for all pieces. Instead, the
computation can be done in a top-down approach as follows
(cf.~\cite{conf/focs/BorradaileSW10}).
Recall that an
  $r$--division is obtained as the set of pieces of the lowest level
  in a recursive division of the graph using Miller's simple-cycle
  separators. Consider the set $\QQ$ of all pieces at all levels of the
  recursion (rather than just the set of pieces at the lowest
  level). Note that there are $O(\lg \abs{V(G)})$ recursive levels since
  the size of the pieces decreases by a constant factor for every constant number of
  applications of Miller's cycle separator theorem.
First, we compute $DDG_Q$ for every piece $Q \in \QQ$. As explained above,
this can be done in $O(\abs{V(G)} \lg \abs{V(G)})$ for all the pieces in a
specific level of the recursion, for a total of $O(\abs{V(G)} \lg^2
  \abs{V(G)}  )$ for all pieces in~$\QQ$. 
Next, we consider a piece $Q$ and denote the two subpieces of $Q$ by $Q_1$ and
$Q_2$. $DDG_{G-{Q_2}}$ is obtained by computing distances 
in $DDG_{Q_1} \cup DDG_{G-Q}$ (see Figure~\ref{fig:extDDG}), using multiple
 applications of FR-Dijkstra, once for each node in $\partial Q_2$.
 This takes $O(\abs{Q} \lg^2
  |\partial Q|)$ time per piece, for an overall $O(\abs{V(G)}
  \lg^2\abs{V(G)} )$ time for
  all pieces in a specific level. Since the number of levels is
  bounded by $O(\lg \abs{V(G)})$, the entire computation takes
  $O(\abs{V(G)}
  \lg^3\abs{V(G)} )$ time.

\begin{figure}[h]
\centerline{\includegraphics[scale=0.4]{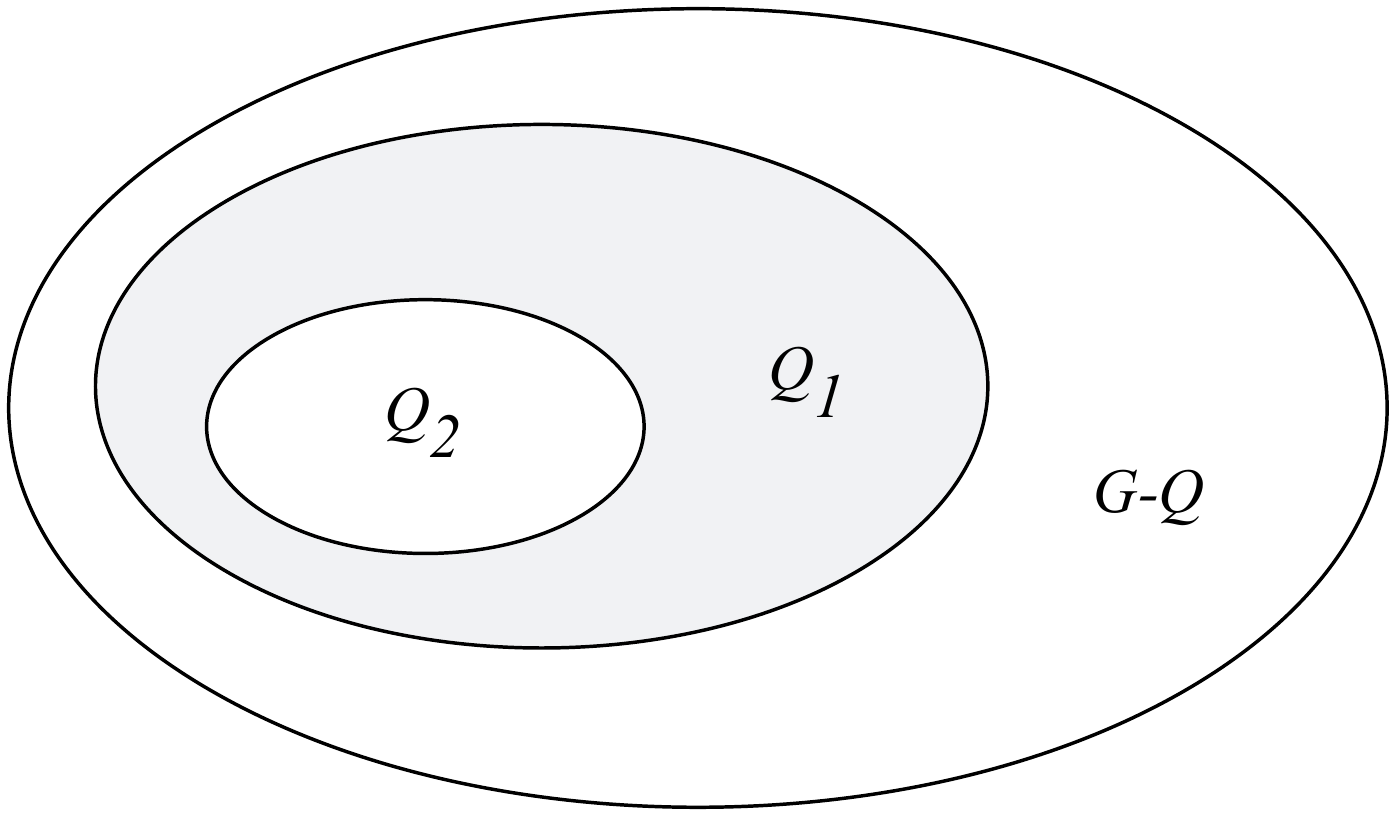}}
\caption{Pieces $Q_1$ and $Q_2$ in some level of the recursive application of Miller's cycle separator theorem. 
The piece $Q$ is the union of $Q_1$ and $Q_2$. 
Distances in $G-Q_2$, the exterior of $Q_2$, are obtained by considering shortest paths in the interior of $Q_1$ and in $G-Q$, the exterior of~$Q$.}
\label{fig:extDDG}
\end{figure}

\section{A Linear-Space Distance Oracle.}\label{sec:linearspace}

We first provide our linear-space data structure. The techniques
employed here are reused in the 
subsequent sections, particularly in the cycle MSSP data structure (Section~\ref{sec:cyclemssp}). 
In the following we prove a more precise version of Theorem~\ref{thm:linearspace}.

{
\renewcommand{\themythm}{\ref{thm:linearspace}}
%\begin{reptheorem}{thm:linearspace}
\begin{mythm}
\linearspacethm

For non-constant $\epsilon>0$, the 
preprocessing time is $O(n\lg (n) \lg({1/\epsilon}))$, 
the space required is $O(n\lg({1/\epsilon}))$, and 
the query time is $O(n^{1/2+\epsilon}+n^{1/2}\lg^2(n)\lg ({1/\epsilon}))$. 
%\end{reptheorem}
\end{mythm}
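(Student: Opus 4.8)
The plan is to run the Fakcharoenphol--Rao (FR) distance-oracle machinery~\cite{journals/jcss/FakcharoenpholR06} --- dense distance graphs, FR-Dijkstra, a query that descends a recursive decomposition --- but on a recursive $r$--division whose \emph{depth} is only $k=\Theta(\lg(1/\epsilon))$ instead of the usual $\Theta(\lg n)$. Storing one internal dense distance graph per piece at a single level of an $r$--division costs $O(n)$ space and $O(n\lg n)$ time (Section~\ref{sec:prel}), so a recursion of depth $\Theta(\lg(1/\epsilon))$ yields $O(n\lg(1/\epsilon))$ space and $O(n\lg n\,\lg(1/\epsilon))$ preprocessing, which is $O(n)$ and $O(n\lg n)$ for constant $\epsilon$. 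The work is in choosing the (non-uniform) parameters so that shrinking the pieces this aggressively still leaves the query time at $\lgO(n^{1/2+\epsilon})$.

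For the construction, let level $k$ be $G$, and for $m\ge1$ let the level--$(k{-}m)$ pieces have size $r_{k-m}=\Theta(n^{\max\{0,\,1-2^{m}\epsilon\}})$; the recursion stops when the piece size reaches a constant, which happens at level $0$ after $k=\lceil\lg(1/\epsilon)\rceil+O(1)$ levels. Build the recursive $r_i$--division of Section~\ref{sec:rdiv} --- each level--$(i{+}1)$ piece splits into $O(r_{i+1}/r_i)$ level--$i$ subpieces, each with $O(\sqrt{r_i})$ boundary nodes lying on $O(1)$ holes --- and for every piece $P$ at every level compute and store $DDG_P$ using Klein's MSSP~\cite{conf/soda/Klein05}, once per hole. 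Level $i$ costs $O((n/r_i)\,r_i\lg r_i)=O(n\lg n)$ time and $O((n/r_i)\,r_i)=O(n)$ space; summing over the $O(\lg(1/\epsilon))$ levels, and adding the $O(n)$ recursion tree, gives the stated preprocessing time and space.

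To answer a query $(s,t)$, locate the level--$0$ pieces holding $s$ and $t$ and their ancestor chains $P^{(0)}_s\subseteq\cdots\subseteq P^{(k)}_s=G$ and $P^{(0)}_t\subseteq\cdots\subseteq P^{(k)}_t=G$, and form $H$ as the union of the explicit arcs of $P^{(0)}_s$ and $P^{(0)}_t$ together with $DDG_{P'}$ for every piece $P'$ that is a subpiece of some $P^{(i)}_s$ or $P^{(i)}_t$ but is not itself one of the $P^{(i)}_s,P^{(i)}_t$ (the pieces one step off the two root-to-leaf paths); then run the FR-Dijkstra extension of Section~\ref{subsec:ddgdijkstra} on $H$ from $s$ (FR-Dijkstra on the DDGs, an ordinary heap on the constant-size explicit leaves) and report the label of $t$. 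Correctness is the standard FR-type argument: $d_H(s,t)\ge d_G(s,t)$ since every arc of $H$ is a true distance in a subgraph of $G$, and $d_H(s,t)\le d_G(s,t)$ by taking a shortest $s$--$t$ path in $G$, cutting it at the boundary nodes where it passes between pieces (distinct pieces share only boundary nodes), replacing each maximal segment inside a one-off piece $P'$ by the matching $DDG_{P'}$ arc, and inductively handling the segments inside the $P^{(i)}_s,P^{(i)}_t$ down to the explicit leaves --- which uses only that each DDG's nodes lie on $O(1)$ faces, the sole property FR-Dijkstra requires. For the running time, $|V(H)|$ is dominated by the top level, $O(n/\sqrt{r_{k-1}})=O(n^{1/2+\epsilon})$, while every level $k{-}m{-}1$ below contributes only $O(r_{k-m}/\sqrt{r_{k-m-1}})=O(n^{1/2})$ by the choice of the $r_i$; hence $|V(H)|=O(n^{1/2+\epsilon}+n^{1/2}\lg(1/\epsilon))$ and FR-Dijkstra runs in $O(|V(H)|\lg^2n)=O(n^{1/2+\epsilon}+n^{1/2}\lg^2n\,\lg(1/\epsilon))$, absorbing the $\lg^2n$ on the first term into a marginally larger $\epsilon$; for constant $\epsilon$ this is $O(n^{1/2+\epsilon})$.

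The one genuinely new ingredient --- and the main obstacle --- is the joint tuning of the $r_i$'s: the per-level query cost $r_{i+1}/\sqrt{r_i}$ must collapse to $\lgO(\sqrt n)$ already after the first step while the recursion depth stays $O(\lg(1/\epsilon))$, so that the $O(n)$-per-level space does not accumulate. The doubly-exponential schedule $r_{k-m}\approx n^{1-2^m\epsilon}$ is precisely what makes the top-level term $n^{1/2+\epsilon}$, every later term $n^{1/2}$, and the depth $\lg(1/\epsilon)$ all hold at once; everything else --- computing DDGs by MSSP, the FR-Dijkstra variant that mixes explicit planar leaves with DDGs, and the ``boundary with respect to $G$'' convention that lets the shortcutting induction cross levels --- is imported directly from Section~\ref{sec:prel}.
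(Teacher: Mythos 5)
Your proposal is correct and follows essentially the same approach as the paper: a recursive $r$--division of depth $\Theta(\lg(1/\epsilon))$ with the doubly-exponential piece-size schedule $r_i$ chosen so that each level contributes $O(\sqrt n)$ nodes to the FR-Dijkstra graph except the top level which contributes $O(n^{1/2+\epsilon})$. The only cosmetic differences are that the paper stops the recursion at leaf size $r_0=\sqrt n$ and uses the linear-time SSSP of Henzinger et al.\ there, and that it runs two one-sided FR-Dijkstra searches (from $u$ and into $v$) stitched by a $\min$ over $\partial P_u$, whereas you descend to constant-size leaves with explicit arcs and run a single search on the combined graph; both variants yield the same asymptotic bounds.
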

}

Our distance oracle is an extension of the oracle in Fakcharoenphol and Rao~\cite{journals/jcss/FakcharoenpholR06}. 
The main ingredients of our improved space vs.~query time tradeoff are {\em (i)} using recursive $r$--divisions instead of cycle separators, 
 and {\em (ii)} using an {\em
  adaptive} recursion,\footnote{This appears to be the main difference to the distance oracle of Nussbaum~\cite{Nussbaum11}, which uses a non-adaptive recursion. Using our adaptive recursion is crucial whenever $S\in[n\lg\lg n,n\lg n)$.}  where the ratio between the boundary sizes of 
piece at consecutive levels is $\sqrt n$.

%\begin{proof}
We split the proof into descriptions and analysis of the preprocessing and query algorithms. 
Let $k =  \Theta(\lg(1/\epsilon))$.

\paragraph{Preprocessing.} We compute the
recursive $r$--division of the graph with $k$
recursive levels and values of $\{r_i\}_{i=0}^k$  to be
specified below. This takes $O(kn\lg n)$ time.
We then compute the dense distance graph for each
piece. This is done for a piece $P$, with $r$ nodes and $O(\sqrt{r})$ boundary nodes
on a constant number of holes, by applying Klein's MSSP
algorithm~\cite{conf/soda/Klein05} as described in Section~\ref{subsec:ddgdijkstra}.
Thus, all of the boundary-to-boundary
distances in $P$ are computed in $O(r \lg r)$ time. 
Summing over
all $O(n/r_i)$ pieces at level $i$, 
the preprocessing time per level is 
 $O(n \lg r_i)$. The overall time to compute 
the dense distance graphs for all
pieces over all recursive levels is therefore $O(kn \lg n)$. 

The space required to store $DDG_P$ is $O((\sqrt{r})^2) = O(r)$; summing over
all pieces at level $i$ we obtain space $O(\frac{n}{r_i}r_i) = O(n)$ per level; 
the total space requirement is $O(kn)$.

\paragraph{Query.}
Given a query for the distance between nodes $u$ and $v$, we proceed as
follows.
For simplicity of the presentation, we initially assume that neither $u$ nor $v$ are
boundary nodes.

Let $P_0$ be the level--$0$ piece that contains $u$.
We compute distances from $u$ in $P_0$. This is done in $O(r_0)$ time using the algorithm
of Henzinger et al.~\cite{journals/jcss/HenzingerKRS97}. Denote these
distances by $\dist_{P_0}(u,w)$ for $w\in P_0$. 
Let $H_0$ denote the star graph with center $u$ and leaves~$w\in\partial
P_0$.
The arcs of $H_0$ are directed from $u$ to the leaves, and their
lengths are the corresponding distances in $P_0$.

Let $S_u$ be the set of pieces that contain $u$. Note that $S_u$
contains exactly one piece of each level.
Let $R_u$ be the union of subpieces of every piece in $S_u$. That is,
$R_u = \bigcup_{P \in S_u} \{P': P' \textrm{ is a subpiece of } P \}$.
Let $H_u$ be the union of the dense distance graphs of the pieces in~$R_u$. 
We use FR-Dijkstra (see
Section~\ref{subsec:ddgdijkstra}) to compute distances from $u$ in
$H_u \cup H_0$.
Observe that any shortest path from $u$ to a node of $H_u$ can be decomposed into a
shortest path in $P_0$ from $u$ to $\partial P_0$ and shortest paths
each of which is between boundary nodes of
some piece in $R_u$. 
% \end{proof}
Since all $u$-to-$\partial P_0$ shortest paths  in $P_0$ are represented
in $H_0$, 
and since all shortest paths between boundary nodes of pieces in $R_u$
 are represented in $H_u$, this observation %Lemma~\ref{lem:decomp1}
implies that distances from $u$ to nodes of $H_u$ in $H_u \cup H_0$
are equal to distances from $u$ to nodes of $H_u$ in $G$. We denote
these distances by $\dist_G(u,w)$ for nodes $w\in H_u$.
 
We repeat a similar procedure for $v$ (reversing the direction of arcs)
 to compute $\dist_G(w,v)$, the distances in $G$ from every node $w\in H_v$ to~$v$. 

Let $P_{uv}$ be the lowest-level piece that contains both $u$ and
$v$. Assume first that $P_{uv}$ is not a level--$0$ piece. 
Let $P_u$~($P_v$) be the subpiece of $P_{uv}$ that contains $u$~($v$). 
Since $P_{uv}$ is both in $S_u$ and in $S_v$, both $P_u$ and
$P_v$ are in $R_u$ as well as in $R_v$. This implies that we have
already computed $\dist_G(u,w)$ and $\dist_G(w,v)$ for all $w \in \partial P_u$.
Since we have assumed that $P_{uv}$ is not a level--$0$ piece, the 
shortest $u$-to-$v$ path must contain some node of $\partial P_u$.
Therefore, the $u$-to-$v$ distance can be found by computing  
$$\min_{w \in \partial P_u} \dist_G(u,w) + \dist_G(w,v).$$

If $P_{uv}$ is a level--$0$ piece, then $P_{uv} = P_0$ , and the $u$-to-$v$ distance can be
found by computing 
$$\min \left\{ \dist_{P_0}(u,v) , \min_{w \in \partial P_{uv}}
    \{ \dist_G(u,w) + \dist_G(w,v) \} \right\}.$$ 

The case when $u$ or $v$ are boundary nodes is a degenerate case that 
can be solved by the above algorithm. Let $Q_u$ be the highest-level piece of which $u$
is a boundary node. We have the preprocessed distances in $Q_u$ from $u$ to all
other nodes of $\partial{Q_u}$. Therefore, it suffices to replace
$S_u$ above with the set of pieces that contain $Q_u$ as a subgraph in order to assure that
$H_u$ is small enough and that 
the distances computed by the fast implementation of Dijkstra's
algorithm are the distances from $u$ to nodes of $H_u$ in $G$.

\paragraph{Query Time.}  Computing the
distances $\dist_{P_0}(\cdot,\cdot)$ takes $O(r_0)$ time. 
Let $\abs{V(H_u)}$ denote the number of nodes
of $H_u$. The FR-Dijkstra implementation runs in 
$O(\abs{V(H_u)} \lg^2 n)$ time. It therefore remains to bound $\abs{V(H_u)}$. 
Let $P_i$ be the level--$i$ piece in $S_u$. $P_i$~has $O(\frac{r_i}{r_{i-1}})$
subpieces, each with $O(\sqrt{r_{i-1}})$ boundary nodes. Therefore, the
contribution of $P_i$ to $\abs{V(H_u)}$ is $O(\frac{r_i}{\sqrt{r_{i-1}}})$. 
The total running time is therefore 
$$ O\left(r_0 + \lg^2n \sum_{i=1}^k \frac{r_i}{\sqrt{r_{i-1}}}\right).$$
Recall that $r_k = \abs{V(G)} = n$, and set $r_0 = \sqrt{n}$.
For $i=1 \dots k-1$ we recursively define $r_i$ so as to satisfy
$$\frac{r_i}{\sqrt{r_{i-1}}} = \sqrt{n} \nonumber.$$
This implies 
\begin{eqnarray*}
r_1 &=&  n^{\frac{1}{2} +  \frac{1}{4}}   =
n^{1 - \frac{1}{4}}  \\
r_2 &=&  n^{\frac{1}{2} +
  \frac{3}{8}}  = n^{1 -\frac{1}{8}}  \\
r_3 &=&  n^{\frac{1}{2} +
  \frac{7}{16}} = n^{1 - \frac{1}{16}} \\
& \hdots & \\
r_{k-1} & = & n^{1 - \frac{1}{2^k}}.
\end{eqnarray*}
The total running time is thus bounded by
\begin{eqnarray} \label{eq:Hu}
O\left(\sqrt{n} + \lg^2n\left( (k-1)\sqrt{n} + \frac{n}{\sqrt{n^{1 -
        \frac{1}{2^k}}}}\right)\right) & \leq & \nonumber \\
O\left( \left( k\sqrt{n} + n^{\frac{1}{2}
    + \frac{1}{2^{k+1}}} \right) \lg^2n\right). & &
\end{eqnarray} 
By setting $k =  \Theta(\lg(1/\epsilon))$ we obtain the claimed running times.

\pagebreak

\section{A Cycle MSSP Data Structure for Planar Graphs.}
\label{sec:cyclemssp}
In this section we provide our main technical tool. We prove Theorem~\ref{thm:cyclemssp}, which we restate here. 

{
\renewcommand{\themythm}{\ref{thm:cyclemssp}}
\begin{mythm}
\cyclemssp
\end{mythm}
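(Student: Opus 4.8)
The plan is to reduce the cycle to the face case handled by Klein's MSSP (Section~\ref{subsec:mssp}) and then to trade Klein's $O(n\lg n)$ space for $O(n\lg\lg c)$ by replacing explicit shortest-path trees with dense distance graphs of a recursive $r$-division, exactly as in the linear-space oracle of Section~\ref{sec:linearspace}. First I would cut $G$ along the simple cycle $C$, which is a Jordan curve in the embedding: this splits $G$ into $G_{\mathrm{in}}$ (the subgraph induced by the vertices on or inside $C$) and $G_{\mathrm{out}}$ (the subgraph induced by the vertices on or outside $C$), so that all $c$ vertices of $C$ lie on a single face of each of the two graphs. Let $D_{\mathrm{in}}$ and $D_{\mathrm{out}}$ be the dense distance graphs on $V(C)$ whose arc lengths are the distances within $G_{\mathrm{in}}$ and within $G_{\mathrm{out}}$, respectively; these are precisely $DDG_{G_{\mathrm{in}}}$ and $DDG_{G_{\mathrm{out}}}$ when $G_{\mathrm{in}}$ and $G_{\mathrm{out}}$ are viewed as top-level pieces whose boundary is $V(C)$, which is legitimate because $c=O(\sqrt n)$.

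Assume without loss of generality that $u\in V(G_{\mathrm{in}})$ (the case $u\in V(G_{\mathrm{out}})$ is symmetric, and $u\in V(C)$ is a degenerate subcase handled below). The key reduction is that for every $x\in V(C)$, $\dist_G(u,x)$ equals the distance from $u$ to $x$ in the auxiliary graph $A$ on vertex set $\{u\}\cup V(C)$ whose arcs are $D_{\mathrm{in}}\cup D_{\mathrm{out}}$ together with an arc $u\to y$ of length $\dist_{G_{\mathrm{in}}}(u,y)$ for each $y\in V(C)$. The ``$\le$'' direction holds because every arc of $A$ is realized by a walk of the same length in $G$. For ``$\ge$'', a shortest $u$-to-$x$ path in $G$ first meets $V(C)$ at some vertex $y$ along a prefix that is confined to $G_{\mathrm{in}}$ --- an edge of a planar embedding cannot cross the curve $C$, so leaving the disk bounded by $C$ would require passing through a vertex of $C$ --- and the remaining $y$-to-$x$ portion breaks at the vertices of $V(C)$ into maximal subpaths, each confined to $G_{\mathrm{in}}$ or to $G_{\mathrm{out}}$; hence the whole path dominates a walk of equal length in $A$. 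Since $D_{\mathrm{in}}$ and $D_{\mathrm{out}}$ are dense distance graphs on $O(c)$ vertices lying on a single face, the single-source distances from $u$ in $A$ are computed by FR-Dijkstra (Section~\ref{subsec:ddgdijkstra}) in $O(c\lg^2 c)$ time, the $c$ arcs out of $u$ being relaxed as ordinary heap edges. So the only remaining work at query time is to produce $\dist_{G_{\mathrm{in}}}(u,y)$ for all $y\in V(C)$.

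For that, I would equip $G_{\mathrm{in}}$ (and symmetrically $G_{\mathrm{out}}$) with a recursive $r$-division having $k=\Theta(\lg\lg c)$ adaptive levels whose top-level piece is all of $G_{\mathrm{in}}$ with boundary $V(C)$, the parameters $\{r_i\}$ being chosen exactly as in Section~\ref{sec:linearspace} so that $r_i/\sqrt{r_{i-1}}$ stays $\Theta(c)$ across levels. Preprocessing computes and stores $DDG_P$ for every piece $P$ of both hierarchies using Klein's MSSP, in $O(\abs{P}\lg\abs{P})$ time and $O(\abs{P})$ space per piece, hence $O(n\lg n)$ time and $O(n)$ space per level and $O(n\lg\lg c)$ space overall; the total preprocessing time fits within $O(n\lg^3 n)$, and $D_{\mathrm{in}},D_{\mathrm{out}}$ are obtained for free. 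To answer a query for $u\in V(G_{\mathrm{in}})$ I would run the query algorithm of Section~\ref{sec:linearspace} inside the planar graph $G_{\mathrm{in}}$: compute $\dist_{P_0}(u,\cdot)$ in $u$'s level-$0$ piece $P_0$ with the linear-time SSSP algorithm of~\cite{journals/jcss/HenzingerKRS97}, form the star from $u$ to $\partial P_0$, and run FR-Dijkstra on that star together with the union of the dense distance graphs of all subpieces of the pieces on $u$'s ancestor chain $P_0\subset\cdots\subset P_k=G_{\mathrm{in}}$. Every $y\in V(C)$ is a boundary node of $G_{\mathrm{in}}$, hence (by the hierarchy property recalled in Section~\ref{sec:rdiv}) a boundary node of the subpiece of $G_{\mathrm{in}}$ that contains it, so all of $V(C)$ is reached and the computed values equal $\dist_{G_{\mathrm{in}}}(u,\cdot)$. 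The running time is analyzed exactly as in~\eqref{eq:Hu}: with $k=\Theta(\lg\lg c)$ levels the boundary graph swept has size $O(c\lg\lg c)$, giving $O(c\lg^2 c\lg\lg c)$ time, which dominates the $O(c\lg^2 c)$ combination step. Feeding the result into $A$ and running FR-Dijkstra as above outputs all $\dist_G(u,x)$, $x\in V(C)$; the degenerate cases (where $u$ is a boundary node of some $G_{\mathrm{in}}$-piece, in particular $u\in V(C)$) are handled as in Section~\ref{sec:linearspace}, by starting the FR-Dijkstra from the highest-level piece of which $u$ is a boundary node.

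The main obstacle is the same delicate balance as in Theorem~\ref{thm:linearspace}: the sequence $\{r_i\}$ and the level count $k$ must be chosen so that simultaneously the number of levels stays $\Theta(\lg\lg c)$ (which keeps the space at $O(n\lg\lg c)$) and the boundary graph swept by FR-Dijkstra at query time --- including the $O(n/\sqrt{r_{k-1}})$ contribution of the $\Theta(n)$-size top-level piece --- stays $\lgO(c)$ rather than $n^{1/2+\epsilon}$. This is precisely why $k$ must be allowed to grow like $\lg\lg c$ and why one wants $c$ essentially as large as $\Theta(\sqrt n)$ for the claimed query bound to make sense. A secondary but non-trivial point is the topological step of cutting along $C$ so that its vertices genuinely lie on one face of each side and are consistently classified as boundary nodes with respect to $G$, which is what makes $D_{\mathrm{in}},D_{\mathrm{out}}$ legitimate dense distance graphs and lets the hierarchy property carry $V(C)$ all the way down to the leaf pieces.
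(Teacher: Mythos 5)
Your reduction to an auxiliary graph $A$ on $\{u\}\cup V(C)$, with arcs from $D_{\mathrm{in}}\cup D_{\mathrm{out}}$ plus a star from $u$ with lengths $\dist_{G_{\mathrm{in}}}(u,\cdot)$, is correct: the Jordan-curve argument cleanly shows that $G$-distances from $u$ to $C$ equal $A$-distances. The genuine gap is in how you compute $\dist_{G_{\mathrm{in}}}(u,\cdot)$: you run the adaptive recursion of Section~\ref{sec:linearspace} directly on the piece $G_{\mathrm{in}}$, whose top level has size $\Theta(n)$, with $k=\Theta(\lg\lg c)$ levels and the constraint $r_i/\sqrt{r_{i-1}}=\Theta(c)$. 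That recursion is a contraction towards the fixed point $r^{\ast}=c^{2}$, so after $\Theta(\lg\lg c)$ levels $r_{k-1}=\Theta(c^{2})$, and the top-level contribution to the boundary graph is $r_k/\sqrt{r_{k-1}}=\Theta(n/c)$, not $\Theta(c)$. When $c=\Theta(\sqrt n)$ these coincide, but the theorem asserts (and the applications in Section~\ref{sec:general:wholerange} genuinely use) the bound for every $c=O(\sqrt n)$: when $S\gg n\lg\lg n$, the holes have $c\approx n/\sqrt S=o(\sqrt n)$ nodes, and the query time you would actually get is $\tilde O(n/c)$, not $\tilde O(c)$. You notice this in your final paragraph but then argue that one ``wants $c$ essentially as large as $\Theta(\sqrt n)$''; that is not an option --- the small-$c$ regime is exactly what the tradeoff theorem needs.

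The missing idea is the one the paper introduces between your two steps: rather than recursing on all of $G_{\mathrm{in}}$, first compute an $r$-division of $G_0$ and $G_1$ with $r=c^{2}$, declaring all of $V(C)$ as boundary of every piece. Each resulting piece $P$ has $O(c^{2})$ nodes and $O(c)$ boundary nodes, so the adaptive recursion inside $P$ only spans from $O(c)$ to $O(c^{2})$ and every level, including the top, contributes $O(c)$ to the boundary graph. But now a shortest $u$-to-$C$ path may leave and re-enter $P$, so you also need the external dense distance graph $DDG_{G_0-P}$ on $\partial P\cup V(C)$, precomputed top-down as in Section~\ref{sec:extddg} (this is what drives the $O(n\lg^{3}n)$ preprocessing). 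The query then runs FR-Dijkstra on $H_0\cup H_u\cup DDG_{G_0-P}\cup DDG_C$, all on $O(c\lg\lg c)$ nodes. Your version omits both the $r=c^{2}$ division and the external DDG, which is precisely why it does not extend below $c=\Theta(\sqrt n)$.
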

}

% \begin{reptheorem}{thm:cyclemssp}
% \cyclemssp
% \end{reptheorem}

\paragraph{Comparison with Klein's MSSP data structure}
\label{sec:cyclemssp:comparison}
Our data structure can be seen as an alternative to Klein's MSSP data structure (see Section~\ref{subsec:mssp}) 
with two main advantages  (which we exploit in Section~\ref{sec:general:wholerange}): 
\begin{itemize}
\item our data structure can handle queries to any not-too-long cycle as opposed to a single face (the crucial difference and difficulty is that shortest paths may cross a cycle but not a face),
\item the space requirements are only $O(n\lg\lg c)$ (we internally rely on the data structure in Theorem~\ref{thm:linearspace}, so even $O(n)$ is possible at the cost of increasing the query time) %\rem{should add details for this, maybe another thm version} 
as opposed to $O(n\lg n)$, 
\end{itemize}
and three main disadvantages: 
\begin{itemize}
\item our data structure cannot efficiently answer queries from $u$ to a {\em single} node 
on the cycle $C$; such a query requires the same time as computing the distances from $u$ to all the nodes on $C$
(in many existing applications, this disadvantage is not really problematic, since MSSP is used for all nodes on the face anyway),
\item our data structure requires amortized time $O(\lg^2c\lg\lg c)$ per node on the cycle (as opposed to $O(\lg n)$), 
which is slower for long cycles, and 
\item the preprocessing time of our data structure is $O(n\lg^3n)$ as opposed to $O(n\lg n)$. 
\end{itemize}

Let $G$ be an embedded planar graph. 

\paragraph{Preprocessing.}
Let $G_0$ be the exterior of~$C$. That is, the graph obtained from $G$ by deleting all
nodes strictly enclosed by $C$. Consider $C$ as the infinite face of $G_0$.
Similarly, let $G_1$ be the interior of $G$. Namely, the graph obtained from $G$ by deleting all
nodes not enclosed by $C$. Consider $C$ as the infinite face of $G_1$. Note that we have 
reduced the problem from query nodes on a cycle $C$ to query nodes on a face. The query algorithm 
is supposed to handle paths that cross $C$. 
The preprocessing step consists of the following:
\begin{enumerate}

\item Computing $DDG_C$ and $DDG_{G- C}$. This can
be done in $O((n + c^2) \lg n)$ time using Klein's MSSP
algorithm~\cite{conf/soda/Klein05}.  Storing $DDG_C$ and $DDG_{G- C}$
requires $O(c^2) = O(n)$ space.

\item Computing an $r$--division of $G_i$ ($i\in\{0,1\}$) with $r=c^2$. Each
piece has $O(c^2)$ nodes and $O(c)$ boundary nodes incident to a
constant number of holes. Consider \emph{all} the nodes of $C$ as boundary nodes
of every piece in the division. % That is, if piece $P$ has boundary
% $\partial P$ then we consider $\partial P \cup C$ as the boundary of
% the piece.
% TODO: some things are under the rug here, for example, number of holes per piece increases etc. but we don't care
Note that each piece still has $O(c)$ boundary
nodes. %\rem{CS: reviewer comment: Should a node of $C$ belong to a
  % piece in order to be regarded as a boundary node of that
  % piece?}\rem{SM: no, regardless of that. this adds at most one hole
  % to the boundary of $P$ so is fine.}
This step takes $O(n \lg n)$ time.
 
\item \label{Precursicve} Computing, for each piece $P$,  a recursive $r$--division of $P$ as the one in the preprocessing step of the oracle in
  Section~\ref{sec:linearspace} (Theorem~\ref{thm:linearspace}) with $\epsilon = 1/\lg c $. That is, the
  number of levels in this recursive $r$--division is $k = \Theta(\lg \lg
  c)$. The top-level (level--$k$) piece in this recursive division is the entire piece $P$. In the
  description in Section~\ref{sec:linearspace}, the top-level piece
  is the entire graph and therefore it has no boundary nodes. Here, in
  contrast, we consider the boundary nodes of $P$ as boundary nodes of
  the top-level piece in the decomposition (and thus, as boundary
  nodes of any lower-lever piece in which they appear). This does not asymptotically
  change the total number of boundary nodes at any level of the
  recursive decomposition since $P$ has $O(c)$ boundary
  nodes, and every level of the recursive decomposition consists of a
  total of $\Omega(c)$ boundary nodes. 
The time to compute the recursive $r$--division for all pieces is bounded by $O(n \lg^2 n)$.

\item \label{Poracledist} Computing, for each piece $P$, the dense distance
  graph for each of the pieces
  in the recursive decomposition of~$P$. Let $H_P$ denote the union of the dense
  distance graphs for all the pieces in the recursive decomposition
  of $P$. As discussed in Section~\ref{sec:linearspace}, the space
  required to store $H_P$ is $O(\abs{P} \lg \epsilon) = O(\abs{P} \lg\lg c)$. Using the methods presented
  in Section~\ref{sec:linearspace}, computing $H_P$ takes $O(\abs{P}\lg \abs{P} \lg
  \lg c) = O(c^2 \lg c \lg\lg c)$. Thus, the total space and time required over all pieces $P$ is $O(n
\lg \lg c)$ and  $O(n \lg c \lg\lg c)$, respectively.
\item \label{Pexternal} Computing, for each piece $P$, the dense distance graph $DDG_{G_i-P}$. Recall
  that we 
  consider the nodes of $C$ as boundary nodes of {\em every} piece
  in the division. These dense distance graphs can be computed as
  described in Section~\ref{sec:extddg}. 
  As shown there, the entire computation (for all pieces combined) takes
  $O(n \lg^3n)$ time. 
\end{enumerate}
The time required for the preprocessing step is therefore $O(n \lg^3
n)$ and the space required is $O(n \lg\lg c)$.
\begin{figure*}[ht!]
\centerline{\includegraphics[scale=0.6666]{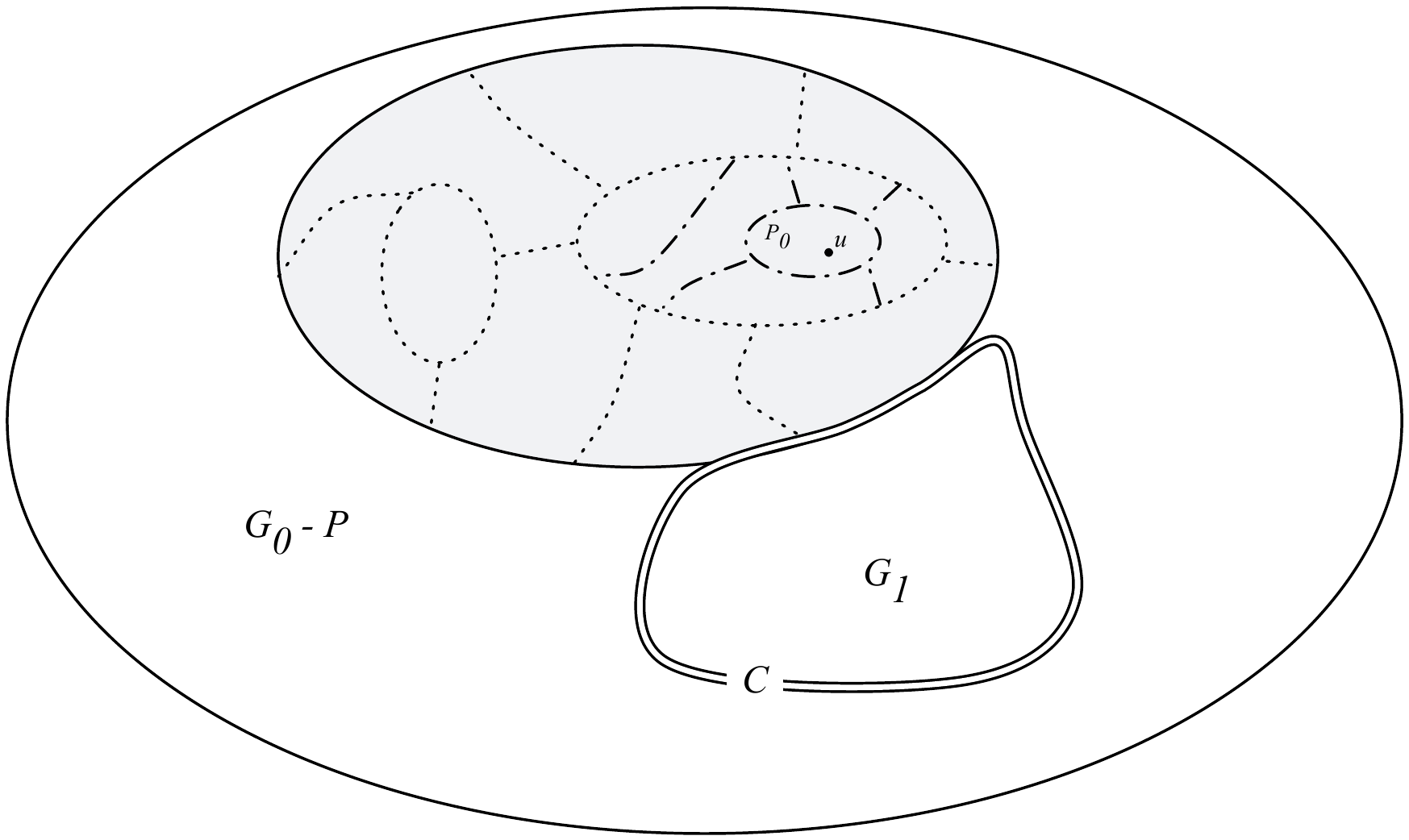}}
\caption{A schematic diagram showing the various subgraphs whose dense
distance graphs are used in a query to the cycle MSSP data structure. The
cycle $C$ is double-lined. The interior of $C$
is the subgraph $G_1$. The query node $u$ is indicated by a small solid circle. 
The piece $P$ in the
$r$--division of the exterior of $C$ ($G_0$) is shown as a 
grey region with solid boundary. The boundaries of the pieces whose
dense distance graphs are in $H_u$ are shown as dotted lines (one level) and dashed-dotted lines (another level). $P_0$ is the smallest
piece of $P$ that contains $u$. Any shortest path
from $u$ to $C$ can be decomposed into a shortest path from $u$ to
$\partial P_0$ followed by shortest paths between nodes on the boundaries
shown in the figure.}
\label{fig:cycleDS}
\end{figure*}

\paragraph{Query.} When queried with a node~$u$, the data structure
outputs the distances from $u$ to all the nodes of~$C$. We describe the
case where $u$ is not enclosed by~$C$. In this case, we use the dense
distance graphs computed in the preprocessing step for~$G_0$. 
The symmetric case is handled
similarly, by using the dense distance graphs computed for~$G_1$.

Let $P$ be the piece in the $r$--division of $G_0$ to which $u$
belongs. Recall that $P$ consists of $O(c^2)$ nodes.
Consider the recursive $r$--division of $P$ computed in
item~\ref{Precursicve} of the preprocessing stage. 
Let $P_0$ be the level--$0$ piece of $P$ that contains $u$. $P_0$
consists of $O(\sqrt{c^2}) = O(c)$ nodes.\footnote{Here, as in 
  Section~\ref{sec:linearspace}, we assume that $u$ is not a boundary
  node in the recursive $r$--division of $P$. The case where $u$
  is such a boundary node is degenerate, see Section~\ref{sec:linearspace}.}   

We first compute~\cite{journals/jcss/HenzingerKRS97}, in $O(c)$
time, the distances from $u$ to all nodes of $P_0$, and store them in a
table $\dist_{P_0}$.
We then compute, using FR-Dijkstra, the distances from $u$ in the union of the following
dense distance graphs (see Figure~\ref{fig:cycleDS}):
\begin{enumerate}
\item $H_0$, the star graph with center $u$ and leaves $\partial
  P_0$. The arcs of $H_0$ are directed from $u$ to the leaves and
  their lengths are the corresponding lengths in  $\dist_{P_0}$.
\item $H_u$, the subset of dense distance graphs in $H_P$ that correspond to
pieces in the recursive decomposition of $P$ that contain $u$ and
their subpieces. These dense distance graphs are available in $H_u$.
\item $DDG_{G_0-P}$
\item $DDG_{C}$ 
\end{enumerate}
Note that the first two graphs are the analogs of $H_0$ and $H_u$ from
Section~\ref{sec:linearspace}.

Distances from $u$ in the union of the above graphs are equal to  the distances from $u$  
in $G$. This is true since  any $u$-to-$C$
shortest path can be decomposed into 
{\em (i)} a shortest path in $P_0$ from $u$ to
$\partial P_0$, 
{\em (ii)} shortest paths each of which is a shortest
path in $Q$ between boundary nodes of $Q$ for some piece $Q$ in the
recursive $r$--division of $P$ that is
represented in $H_u$, 
{\em (iii)} shortest paths in $G_0-P$ between nodes of $\partial P \cup C$,
and {\em (iv)} shortest paths in the interior of $C$ between nodes of~$C$. 

To bound the running time of FR-Dijkstra we need to bound the number of nodes in
all dense distance graphs used in the FR-Dijkstra computation. 
$H_0$ has $O(\sqrt c)$ nodes. The
analysis in Section~\ref{sec:linearspace} shows that the graphs
in the set $H_u$ consist of $O(\sqrt{\abs{P}} \lg\lg \abs{P})$ nodes
(substitute $k=1/\lg\abs{P}$ in eq.~(\ref{eq:Hu})). 
$DDG_{G_0-P}$ has $O(c + \sqrt{\abs P}) = O(c)$ nodes, and 
$DDG_{C}$ has $c$ nodes. Combined, 
the running time of the invocation of FR-Dijkstra is bounded
by $O(c \lg^2 c \lg \lg c)$. This dominates the $O(c)$ time required
for the computation  of $\dist_{P_0}$, so the overall query time is
$O(c \lg^2 c \lg \lg c)$, as claimed.

\section{Distance Oracles with Space $S\in[n\lg\lg n,n^2]$.}
\label{sec:general:wholerange}
\label{sec:general:simple}

In this section we prove Theorem~\ref{thm:general}. Using our new cycle MSSP data structure, the proof is rather straightforward. 

{
\renewcommand{\themythm}{\ref{thm:general}}
%\begin{reptheorem}{thm:linearspace}
\begin{mythm}
\tradeoffthm
\end{mythm}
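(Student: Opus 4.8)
The plan is to combine the cycle MSSP data structure of Theorem~\ref{thm:cyclemssp} with a recursive $r$--division and FR-Dijkstra, parameterizing the recursion depth by the desired space bound~$S$. First I would set $r$ so that the $r$--division has $\Theta(n/r)$ pieces, each with $O(r)$ nodes and $O(\sqrt r)$ boundary nodes lying on a constant number of holes. The key observation is that the boundary $\partial P$ of a piece $P$, together with the separating cycles bounding its holes, forms a non-self-crossing cycle (or a constant number of such cycles) of size $O(\sqrt r)$. Hence, for each piece $P$ we can attach \emph{two} cycle MSSP data structures from Theorem~\ref{thm:cyclemssp}: one for the interior of $P$ (so that, given a query node $u \in P$, we obtain all distances from $u$ to $\partial P$ inside $P$) and one for the exterior $G - P$ (so that, given $u \in \partial P$, we obtain all distances from $u$ to every other node of $\partial P$ in $G - P$). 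Applying Theorem~\ref{thm:cyclemssp} with $c = O(\sqrt r)$, each such structure costs $O(r \lg\lg r)$ space and has query time $\tilde O(\sqrt r)$; summed over the $\Theta(n/r)$ pieces this is $O(n \lg\lg r) = O(S)$ space when we choose $r = \Theta(n^2/S)$, i.e.\ $S = \Theta(n^2/r)$ up to the $\lg\lg$ factor, and the preprocessing is $\tilde O(n/r \cdot r \lg^3 r) = \tilde O(n)$.

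Next I would describe the query. Given $u$ and $v$, let $P_u$ and $P_v$ be the pieces containing them. Using the interior cycle MSSP structure of $P_u$, read off $\dist_{P_u}(u, w)$ for all $w \in \partial P_u$ in $\tilde O(\sqrt r)$ time; symmetrically get $\dist_{P_v}(w', v)$ for all $w' \in \partial P_v$. Now build the graph $H$ consisting of the star from $u$ to $\partial P_u$, the star from $\partial P_v$ to $v$, and the dense distance graphs $DDG_P$ for \emph{all} pieces $P$ in the $r$--division (computed in preprocessing via Klein's MSSP in $\tilde O(n)$ total time and stored in $O(\sum_P |\partial P|^2) = O(\sum_P r) = O(n^2/r) = O(S)$ space). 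Running FR-Dijkstra on $H$ from $u$ computes $\dist_G(u, v)$: any shortest $u$--$v$ path decomposes into a prefix inside $P_u$ from $u$ to $\partial P_u$, a sequence of boundary-to-boundary subpaths each lying in a single piece (captured by the $DDG_P$'s), and a suffix inside $P_v$ into $v$. The number of nodes of $H$ is $O\big((n/r)\sqrt r\big) = O(n/\sqrt r)$, so FR-Dijkstra runs in $O\big((n/\sqrt r)\lg^2 n\big)$ time. With $r = \Theta(n^2/S)$ this is $\tilde O(\sqrt{S/n^2}\cdot n) \cdot \lg^2 n = O(n S^{-1/2}\lg^2 n)$ up to the root-$\lg\lg$ factors coming from the cycle MSSP query, matching the claimed $O(n S^{-1/2}\lg^2 n \lg^{3/2}\lg n)$.

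It remains to handle two technical points. First, the interior cycle MSSP structure of $P_u$ gives distances \emph{within} $P_u$, not in $G$; but this is exactly what FR-Dijkstra needs as the "$H_0$" star, since the remaining portion of any shortest path lies outside the interior of $P_u$ up to $\partial P_u$ and is captured by the other $DDG_P$'s. Actually, to be careful about paths that leave and re-enter $P_u$, I would instead have FR-Dijkstra also relax the edges of $DDG_{P_u}$ itself; the star from $u$ only needs to reach $\partial P_u$ correctly \emph{within} $P_u$, and re-entry is then handled by $DDG_{P_u}$. Second, for the space range $S \in [n\lg\lg n, n^{4/3}]$ the corresponding $r = n^2/S$ is large ($r \in [n^{2/3}, n/\lg\lg n]$), so $c = O(\sqrt r)$ may exceed the $O(\sqrt n)$ hypothesis of Theorem~\ref{thm:cyclemssp} only when $r > n$, which does not occur; hence Theorem~\ref{thm:cyclemssp} applies throughout, and the $O(n \lg\lg c) = O(n\lg\lg n)$ space term is the binding lower constraint, explaining the lower endpoint $S = n\lg\lg n$. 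The main obstacle I expect is bookkeeping the constant-number-of-holes structure so that "$\partial P$ together with hole boundaries" is genuinely a (bounded collection of) non-self-crossing cycle(s) eligible for Theorem~\ref{thm:cyclemssp}, and verifying that running FR-Dijkstra over the union of all $DDG_P$ plus the two stars (rather than only the pieces containing $u$ and $v$, as in the linear-space oracle) still gives the claimed time bound --- the point being that $O(n/\sqrt r)$ is the total boundary-node count, which is fine, but one must confirm the planarity-on-a-face precondition of FR-Dijkstra holds for each $DDG_P$, which it does by the hole property of the $r$--division.
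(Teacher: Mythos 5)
Your plan fails to achieve the claimed query-time bound, and the failure is rooted in how you use the cycle MSSP structure.

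The critical error is the final query-time calculation. You propose running FR-Dijkstra over the union of \emph{all} dense distance graphs $DDG_P$ in the $r$--division. The total number of boundary nodes over all pieces is $O\bigl((n/r)\cdot\sqrt{r}\bigr)=O(n/\sqrt{r})$, and with $r=\Theta(n^2/S)$ this is $n/\sqrt{r}=n\big/(n/\sqrt{S})=\sqrt{S}$, so FR-Dijkstra runs in $O(\sqrt{S}\lg^2 n)$ time. You wrote this as $O(nS^{-1/2}\lg^2 n)$, but $\sqrt{S}\neq nS^{-1/2}$ unless $S=n$. For $S>n$ your bound $\sqrt{S}$ is strictly \emph{increasing} in $S$ and worse than the trivial $\widetilde{O}(\sqrt{n})$ of Fakcharoenphol--Rao, whereas the theorem requires a bound that \emph{decreases} as space grows (reaching $\widetilde{O}(1)$ at $S=n^2$). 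So this approach cannot prove Theorem~\ref{thm:general}.

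The underlying conceptual gap is that you are applying Theorem~\ref{thm:cyclemssp} to the subgraphs $P$ and $G-P$ with $C=\partial P$, which only gives distances \emph{within one side} of $C$ --- exactly what Klein's face MSSP already provides, since $\partial P$ lies on the holes of $P$. The whole point of the cycle MSSP is that it is built on the \emph{full} graph $G$ and, for a query node $u$ anywhere in $G$, returns distances $d_G(u,\cdot)$ to all nodes of $C$ even when shortest paths cross $C$. The paper's proof exploits this directly: for each piece $P$, build a cycle MSSP structure on $G$ for each of the (constant number of) cycles bounding the holes of $P$. At query time, if $s\in P$, query these structures with $t$ (and, with reversed arcs, with $s$) to get $d_G(s,w)$ and $d_G(w,t)$ for all $w\in\partial P$ in $O(|\partial P|\lg^2 n\lg\lg n)=O(\sqrt{r}\lg^2 n\lg\lg n)$ time, then minimize $d_G(s,w)+d_G(w,t)$ over $w\in\partial P$. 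No FR-Dijkstra over all the DDGs is needed at query time. With the paper's choice $r:=(n^2\lg\lg n)/S$ (which also absorbs the $\lg\lg n$ factor so the cycle MSSP space $O\bigl((n/r)\cdot n\lg\lg n\bigr)$ comes out to $O(S)$ --- your $r=n^2/S$ leaves an extra $\lg\lg n$), this yields $O(nS^{-1/2}\lg^2 n\lg^{3/2}\lg n)$ as claimed. Paths that stay entirely inside $P$ (when $s,t$ are in the same piece) are handled by additionally storing a Theorem~\ref{thm:linearspace} oracle on each piece $P$; your proposal does not handle that case.

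Two smaller slips you should be aware of: the space you charge for the cycle MSSP on the exterior $G-P$ is $O(n\lg\lg c)$ per piece (since $|G-P|=\Theta(n)$), not $O(r\lg\lg r)$, which is precisely why the $\lg\lg n$ must be folded into $r$; and $\sum_P|\partial P|^2 = O(n)$, not $O(n^2/r)$.
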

}
% \begin{reptheorem}{thm:general}
% \tradeoffthm
% \end{reptheorem}

%\begin{proof}[Proof of Theorem~\ref{thm:general}]
%\newpage
\begin{proof}
Let $r:=(n^2\lg\lg n) / S$. Note that $r\in[\lg\lg n,n]$ for any $S\in[n\lg\lg n,n^2]$. \\

\paragraph{Preprocessing.} We start by computing an $r$--division. Each
piece has $O(r)$ nodes and $O(\sqrt r)$ boundary nodes incident to a
constant number of holes. 
For each piece $P$ we compute the following:
\begin{enumerate}
\item 
A distance oracle as in Theorem~\ref{thm:linearspace} with $\epsilon=1/\lg r$. 
This takes $O(r\lg r\lg\lg r)$ time and $O(r\lg\lg r)$ space. 
\item For each hole of $P$ (bounded by a cycle in $G$) we compute 
our new cycle MSSP data structure.\footnote{Note that for  $S=o(n\lg n)$ we cannot even afford to store 
Klein's MSSP data
  structure~\cite{conf/soda/Klein05}.}
  Since the number of holes per piece is constant, this
  requires requires $O(n \lg^3 n)$ time and $O(n \lg \lg n)$ space per
  piece.
%$O(n \lg^3 n)$ time per piece, and $O(S\lg^3n/\lg\lg n)$ overall,
%which dominates the preprocessing time.
\label{mssp}
\end{enumerate}
Summing over all pieces, the preprocessing time is
$O(S\lg^3n/\lg\lg n)$ and the  space needed is $O(S)$. 

\paragraph{Query.} Given a pair of nodes $(s,t)$, we compute a shortest
$s$--$t$ path as follows. 
Assume first that $s$ and $t$ are in different pieces.
Let $P$ denote the piece that contains $s$ and let $\partial P$ denote
its boundary.
We compute the distances in $G$ from $\partial P$ to $t$ using 
the cycle MSSP data structures. 
These distances can be obtained in time 
$O(\abs{\partial P}\lg^2 n\lg\lg n)=O(\sqrt{r}\lg^2 n\lg\lg n)$. 
Analogously, we compute the distances in $G$ from $s$ to $\partial P$. 
It remains to find the node $p\in\partial P$ that minimizes $d_G(s,p)+d_G(p,t)$, 
which can be done in $O(\abs{\partial P}) = O(\sqrt r)$ time using a simple sequential search. 

If $s$ and $t$ lie in the same piece,  the shortest
$s$--$t$ path might not visit $\partial P$. To account for this, we query the distance oracle for $P$, which takes
$O(\sqrt{r} \lg^2 r\lg\lg r)$ time. We return the minimum distance found. \qed
\end{proof}

\paragraph{$k$--many distances}
As a consequence, we also obtain an improved algorithm for $k$--many
distances, whenever $k = \Omega(\sqrt{n}/\lg\lg n)$. 
\begin{proof}[of Theorem~\ref{thm:kmanydist}]
For some value of $r$ to be specified below, we preprocess $G$ in time 
$O((n^2/r) \lg^3 n)$, and then we answer each of the $k$ queries in time 
$O(\sqrt{r} \lg^2 r\lg\lg r)$. The total time is
$O((n^2/r) \lg^3 n  + k \sqrt{r} \lg^2 r\lg\lg r)$. This is minimized by setting 
$r = n^{4/3}k^{-2/3} (\lg n/\lg\lg n)^{2/3}$. Note that $r=O(n)$ 
since $k = \Omega(\sqrt{n}\lg n/\lg\lg n)$.
The total running time is thus $O((kn)^{2/3} (\lg n)^{7/3}(\lg\lg n)^{2/3})$.
\qed
\end{proof}

\comment{

our general tradeoff: 
preprocessing : 
(n^2/r) \lg^3 n 
query time :
\sqrt{r} \lg^2 r\lg\lg r

want to balance the time, so 

(n^2/r) \lg^3 n  = k \sqrt{r} \lg^2 r\lg\lg r

assume \lg r = \Theta(\lg n)
we need to set r to 

 r = n^{4/3}k^{-2/3} (\lg n/\lg\lg n)^{2/3} 
 
 then the total time is 
 
 (kn)^{2/3} \lg^{7/3}n(\lg\lg)^{2/3} n 

which we want to be less than Cabello's n^{4/3}\lg^{1/3}n

 (kn)^{2/3} \lg^{7/3}n\lg^{2/3}\lg n < n^{4/3}\lg^{1/3}n
 k < n\lg^{-3}n \lg\lg n

r is required to be at most n 

n >= n^{4/3}k^{-2/3} (\lg n/\lg\lg n)^{2/3} 
k >= n^{1/2} (\lg n/\lg\lg n)

}

\paragraph{Comparison and Discussion.}
The query time of our data structure is at most $O(\sqrt{r} \lg^2 r\lg\lg r)$, which, in terms of $S$, is $O(nS^{-1/2}\lg^{2}n\lg^{3/2}\lg n)$. 
Let us contrast this with Cabello's data structure~\cite{conf/soda/Cabello06} that, for any $S\in[n^{4/3}\lg^{1/3}n,n^2]$ 
has preprocessing time and space $O(S)$ and query time $O(n S^{-1/2}\lg^{3/2}n)$. 
In our construction, we sacrifice a factor of $O(\sqrt{\lg n(\lg\lg n)^3})$ in the query time but we gain a much larger regime for~$S$. 
For the range $S\in[\omega(n\lg n/\lg\lg n), o(n^{4/3}\lg^{1/3}n)]$,
only data structures of size $O(S)$ with query time $O(n^2/S)$  
had been known~\cite{DjidjevWG96} (see also Figure~\ref{fig:tradeoff}).

To conclude this section, let us observe what happens when we gradually decrease the space requirements $S$ from $n^2$ down to $n$, 
and, {\em en passant}, let us pose some open questions. 
For quadratic space (or even slightly below~\cite{WNThesis}), we can obtain constant query time. As soon as we require the space to be $O(n^{2-\epsilon})$ for some $\epsilon>0$, 
the query time increases from constant or poly-logarithmic to a polynomial. It is currently not known whether $\lgO(1)$ is possible or not --- the known lower bounds~\cite{SparseDO,PatrascuRoditty} on the space 
of distance oracles work for non-planar graphs only. 
Further restricting the space, as long as the space available is at least $\Omega(n\lg n)$, the data structure 
can internally {\em store} MSSP data structures. The query time for this regime of $S$ can actually be made slightly faster than what we claim in Theorem~\ref{thm:general} 
(by avoiding the $O(\lg\lg n)$--factor due to the recursion needed in Theorem~\ref{thm:linearspace} and its manifestation in the cycle MSSP data structure). 
The data structure of Nussbaum~\cite{Nussbaum11} obtains a ``clean'' tradeoff with query time proportional to $O(n/\sqrt S)$ for $S\geq n^{4/3}$ ({\em without} logarithmic factors in the query time, currently at the cost of a slower preprocessing algorithm). The obvious open question is whether it is possible to obtain a data structure with space $O(S)$ and query time $O(n/\sqrt S)$ for the whole range of $S\in[n,n^2]$ and without substantial sacrifices with respect to the preprocessing time. Another open question is whether it is possible to improve upon this tradeoff. Note that, for quadratic space, an improvement of an (almost) logarithmic factor is possible~\cite{WNThesis}. %Besides the space--query time tradeoff, the preprocessing time is of course also of particular interest. 

As soon as we require the space to be $o(n\lg n)$, we cannot afford to store the MSSP data structure anymore and 
we are currently forced to rely on the cycle MSSP data structure (Theorem~\ref{thm:cyclemssp}). The query time increases to the time bound claimed in Theorem~\ref{thm:general}. 
When we further restrict the space to $o(n\lg\lg n)$, say $S=\Theta(n\lg(1/\epsilon))$ for some $\epsilon>0$, the query time increases to $O(n^{1/2+\epsilon})$. A different recursion 
(maybe \`a la~\cite{journals/jcss/HenzingerKRS97}) could potentially reduce this to $\lgO(\sqrt n)$. 

Let us briefly consider the {\em additional space} of the data structure, assuming that storing the graph is free. 
It is known that, for {\em approximate} distances, a query algorithm can run efficiently using a data structure that occupies sublinear additional space~\cite{KKS}. 
For exact distances and sublinear space, nothing better than the linear-time SSSP algorithm~\cite{journals/jcss/HenzingerKRS97} is known.

%\pagebreak
\section{Distance Oracles with Query Time Quasi-Proportional to the Shortest-Path Length}

We use our new cycle MSSP data structure %for cycles of length proportional to the shortest-path length
to prove Theorem~\ref{thm:pathlengthquerytime}, which states that there is a distance oracle with 
query time proportional to the shortest-path length. We actually prove two versions, the stronger one 
being a distance oracle with query time proportional to the minimum number of edges (hops) 
on a shortest path. For the stronger version, we need the following assumption, which essentially 
means that approximate shortest paths do not use significantly fewer edges. 

\begin{assumption}
Let $h_G(s,t)$ denote the number of edges on a minimum-hop shortest-path. 
For some constant $\bar\epsilon>0$, all $s$--$t$ paths of length at most $(1+\bar\epsilon)d_G(s,t)$ 
have $\Omega(h_G(s,t))$ edges. 
\label{assumption:hoplength}
\end{assumption}

We restate Theorem~\ref{thm:pathlengthquerytime} and its stronger variant. 

{
\renewcommand{\themythm}{\ref{thm:pathlengthquerytime}}
%\begin{reptheorem}{thm:linearspace}
\begin{mythm}
\pathlengthquerytime

Furthermore, if Assumption~\ref{assumption:hoplength} holds for $G$, the query time is at most 
$O(\min\left\{h\lg^2h\lg\lg h,\sqrt{n}\lg^2n\right\})$ for any pair of nodes $(u,v)$ at {\em hop-distance} $h=h_G(u,v)$. 
\end{mythm}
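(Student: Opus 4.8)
The plan is to build a distance oracle that, given a hop bound (or equivalently a guess for the shortest-path length), performs an iterative, Bellman-Ford-style relaxation that explores only the part of the graph reachable from $u$ within the current hop bound, and to make each such iteration cheap by using the cycle MSSP data structure of Theorem~\ref{thm:cyclemssp} on the boundaries of an $r$-division. Concretely, fix an $r$-division for $r = n^\epsilon$ (or $r$ polylogarithmic, tuned to the claimed preprocessing bound $O(n^{1+\epsilon})$); for each piece $P$ store the linear-space oracle of Theorem~\ref{thm:linearspace} inside $P$, and for each hole of $P$ store a cycle MSSP data structure for the bounding cycle. The total space is $O(n\lg n\lg\lg n)$ as in Theorem~\ref{thm:general}'s construction. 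The key point is that a shortest $u$--$v$ path that uses $h$ edges passes through at most $O(h/\sqrt r)$ pieces (since to leave a piece you must cross one of its $O(\sqrt r)$ boundary nodes, but more usefully: a path of $h$ edges can be covered by $O(\min\{h,n/r\})$ pieces), so we only need to ``activate'' that many pieces.

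The query algorithm proceeds in rounds $i=1,2,\dots$. In round $i$ we maintain a set $A_i$ of active pieces together with tentative distances $d_i(\cdot)$ on the boundary nodes of those pieces and on the relevant cycles; $A_1$ is the piece containing $u$ (and we compute exact distances from $u$ inside it via Theorem~\ref{thm:linearspace}). Given $d_i$ on $\partial P$ for the active pieces, we use the cycle MSSP data structures to propagate distances from the boundary of each active piece to the boundaries of its neighbors, and we use FR-Dijkstra on the union of the dense distance graphs of the currently active pieces to obtain $d_{i+1}$; a piece becomes active in round $i+1$ if one of its boundary nodes received a finite label. The Bellman–Ford-type invariant is that after round $i$, $d_i(w)$ equals the minimum length over all $u$-to-$w$ walks that cross at most $\Theta(i)$ piece boundaries, which (because each piece-crossing corresponds to $\Omega(\sqrt r)$... — more simply, because the shortest path with $\ell$ edges is covered by $O(\ell)$ pieces when $r=O(1)$, or by reindexing) translates into: after $\tilde O(\ell)$ rounds we have the exact $u$-to-$v$ distance, for $v$ at distance $\ell$. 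After round $i$ the number of active pieces, hence the number of boundary nodes fed to FR-Dijkstra, is $\tilde O(i\sqrt r)$... we cap this: each round costs $\tilde O(i)$ work, summing to $\tilde O(\ell^2)$ in the worst case, so instead we run the rounds with a geometrically doubling budget and stop the whole procedure once the active region would exceed $\sqrt n$ boundary nodes, at which point we fall back to the $\tilde O(\sqrt n)$ oracle of Theorem~\ref{thm:general}. This yields query time $O(\min\{\ell\lg^2\ell\lg\lg\ell,\sqrt n\lg^2 n\})$. The refinement under Assumption~\ref{assumption:hoplength} is that we may stop as soon as $v$ gets a label that is within a $(1+\bar\epsilon)$ factor of stabilizing — the assumption guarantees that once $v$'s label is within $(1+\bar\epsilon)d_G(u,v)$ it was produced by a path on $\Omega(h_G(u,v))$ edges, so no cheaper path on $o(h)$ edges exists and we can certify correctness after $\tilde O(h)$ rounds rather than $\tilde O(\ell)$ rounds; concretely, run the doubling-budget rounds, and after the budget reaches $\Theta(h')$ check whether further rounds change $v$'s label by more than a $(1+\bar\epsilon)$ factor — by the assumption, once it stops changing by that much it is exact.

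The main obstacle I expect is the crossing issue: shortest paths may cross the cycles bounding the holes of a piece (not just a face), so a single invocation of Theorem~\ref{thm:cyclemssp} per cycle does not immediately compose into a global Dijkstra — one must argue, exactly as in Section~\ref{sec:general:wholerange} but now ``incrementally'', that the union of the dense distance graphs of the active pieces plus the cycle MSSP structures correctly captures all $u$-to-$v$ walks confined to the active region, and that restricting attention to the active region is lossless for walks of bounded hop-length. The second delicate point is matching the round count to the hop count: a round of FR-Dijkstra over dense distance graphs can traverse many real edges, so one needs the dense distance graph arcs to be "charged" correctly against hops — this is handled because each active piece is entered via its $O(\sqrt r)$-node boundary and a path on $\ell$ edges is covered by $O(\ell)$ pieces (taking $r = O(1)$ degenerates to ordinary Bellman–Ford; taking $r$ larger trades the $\tilde O(\ell)$ round bound for fewer, heavier rounds, and the stated bound follows by choosing the threshold at $\sqrt n$). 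The remaining steps — verifying the $O(n^{1+\epsilon})$ preprocessing time (dominated by building the cycle MSSP structures, $O(n\lg^3 n)$ each over $O(n/r)$ pieces, which is $O(n^{1+\epsilon})$ for $r=n^\epsilon$) and the $O(n\lg n\lg\lg n)$ space — are routine given the earlier sections.
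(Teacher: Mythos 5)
Your proposal takes a genuinely different route from the paper, and unfortunately it does not work: the central obstacle you yourself flag (``matching the round count to the hop count'') is not a detail to be filled in later but a fatal gap.

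The key unjustified claim is that after round $i$ the number of active boundary nodes is $\tilde O(i\sqrt r)$ (and, more fundamentally, that a path of $h$ edges ``passes through at most $O(h/\sqrt r)$ pieces''). Neither holds. A path with $h$ edges can visit up to $h$ distinct pieces, and the set of pieces reachable within $i$ rounds of a Bellman--Ford/BFS-style exploration from $u$ is not bounded by $i$ or by $i\sqrt r$ --- a single vertex of high degree can put $\Theta(n/r)$ pieces into $A_1$, and even in bounded-degree planar graphs the number of pieces within piece-adjacency radius $i$ can be $\Theta(i^2)$ or worse. Since edges can have length exactly $1$, a small hop-radius ball around $u$ gives no control over the portion of the graph that must be examined to \emph{certify} that no shorter path exits and re-enters the active region. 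Your own accounting reflects this: you arrive at $\tilde O(\ell^2)$ and the proposed fix (doubling budget, fallback at $\sqrt n$ boundary nodes) does not repair the $\ell^2$ behaviour when $\ell \ll \sqrt n$. The preprocessing estimate is also off: building a cycle MSSP structure on $G$ for each of the $O(n/r)$ pieces costs $O((n/r)\cdot n\lg^3 n)$, which is $\omega(n^{1+\epsilon})$ for $r = n^{\epsilon}$.

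The paper avoids all of this by not growing a region around $u$ at all. Instead it uses a \emph{global} decomposition: a breadth-first search from an arbitrary face of the planar dual partitions the graph into levels, and the subgraph spanned by $O(w)$ consecutive levels has tree-width $O(w)$ (the ``local tree-width'' property). Because edge lengths are at least $1$, any $u$-to-$v$ path of length $\ell$ has at most $\ell$ edges and hence lies entirely inside the union of two consecutive depth-$2^{i}$ slices once $2^{i}\geq\ell$, regardless of vertex degrees. The paper precomputes, for every scale $i$, a distance oracle on each such union; these are not Klein/MSSP oracles but instances of Theorem~\ref{thm:treewgeneral}, a distance oracle for planar graphs of tree-width $w$ with query time $\tilde O(w)$, whose construction uses sphere-cut separators and the cycle MSSP structure. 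Thorup's approximate oracle is then used at query time to identify the correct scale (directly, or by a geometric level-by-level search under Assumption~\ref{assumption:hoplength}). None of these ingredients --- BFS slicing, local tree-width, sphere-cut $[r,w]$-divisions, or the tree-width oracle --- appears in your proposal, and some such global structural fact is exactly what is needed to replace your round-based exploration, which cannot be made to work.
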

}
% \begin{reptheorem}{thm:pathlengthquerytime}
% \pathlengthquerytime

% Furthermore, if Assumption~\ref{assumption:hoplength} holds for $G$, the query time is at most 
% $O(\min\left\{h\lg^2h\lg\lg h,\sqrt{n}\lg^2n\right\})$ for any pair of nodes $(u,v)$ at {\em hop-distance} $h=h_G(u,v)$. 
% \end{reptheorem}

Our main ingredient is a distance oracle for planar graphs with
{\em tree-width}\footnote{See~\cite{halin,RS2,Arn} for definitions and much more on tree-width.}~$w$. 
%\rem{can the space be $O(n\lg\lg w)$?? had $O(n\lg\lg n)$ before but cycle MSSP should work}
\begin{theorem}
Let $G$ be a planar graph on $n$ vertices with tree-width~$w$. For any constant $\epsilon>0$ and for any value $S$ in the range 
$S\in[n\lg\lg w,n^2]$, there is a data structure with preprocessing time $O(S\lg^2n+n^{1+\epsilon})$ %\rem{need at least $O(n^{1+\epsilon})$ for sphere cut}
and space $O(S)$ that answers distance queries in $O(\min \set{nS^{-1/2}\lg^{2.5}n, w \lg^2w \lg \lg w})$ time per query. 
\label{thm:treewgeneral}
\end{theorem}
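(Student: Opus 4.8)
The plan is to adapt the construction of Theorem~\ref{thm:general} so that the relevant ``size'' parameter is the tree-width $w$ rather than $\sqrt n$, exploiting the fact that a planar graph of tree-width $w$ admits balanced separators of size $O(w)$ (indeed, one can cover the graph by a recursive decomposition in which every piece is separated from the rest by a cycle with $O(w)$ vertices). First I would compute an $r$--division of $G$ with $r$ chosen so that the space budget works out, exactly as in the proof of Theorem~\ref{thm:general}: set $r := (n^2 \lg\lg w)/S$, so that $r \in [\lg\lg w, n]$. For each piece $P$ of the $r$--division, I store (i) a linear-ish-space internal oracle as in Theorem~\ref{thm:linearspace} with $\epsilon = 1/\lg r$, and (ii) for each hole of $P$ a cycle MSSP data structure as in Theorem~\ref{thm:cyclemssp}; since the hole boundaries have $O(\sqrt r)$ nodes this is exactly the preprocessing of Theorem~\ref{thm:general} and gives preprocessing time $O(S\lg^3 n)$ and space $O(S)$, matching the first branch $O(nS^{-1/2}\lg^{2.5}n)$ of the claimed query bound after a slightly more careful accounting of the logarithmic factors (the paper's statement has $\lg^{2.5}n$ rather than $\lg^2 n \lg^{3/2}\lg n$, so I would track the logs more tightly here, e.g.\ using that $r$ may be taken polynomially smaller than $n$).

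The new ingredient is the second branch $O(w\lg^2 w\lg\lg w)$ of the query bound, which should \emph{always} be available regardless of $S$, and which dominates when $w$ is small. For this I would build, on top of $G$, a recursive decomposition via cycle separators of size $O(w)$ — this is the tree-width analogue of Miller's $\sqrt n$-separator, and it exists because a planar graph of tree-width $w$ has all its branch-decomposition separators of width $O(w)$, equivalently balanced vertex separators of size $O(w)$, which by planarity can be taken to be (non-self-crossing) cycles of length $O(w)$ after a standard routing argument. Along this decomposition I install a cycle MSSP data structure (Theorem~\ref{thm:cyclemssp}) for each separating cycle $C$, applicable since $|C| = O(w) = O(\sqrt{n})$ is short enough, together with the $DDG$'s of the $O(\lg n)$ levels of pieces, so that an $s$--$t$ query can be answered by running FR-Dijkstra over the $O(w)$ boundary nodes on the $O(\lg n)$ relevant separator cycles on the root-to-$s$ and root-to-$t$ paths in the decomposition tree, giving query time $O(w\lg^2 w\lg\lg w)$; the space and preprocessing for this component are $\lgO(n)$ and $O(n^{1+\epsilon})$ respectively (the $n^{1+\epsilon}$ absorbing the Theorem~\ref{thm:linearspace}-style internal recursion inside the pieces), which is why the preprocessing bound in the statement carries the additive $n^{1+\epsilon}$ term. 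At query time we simply return the minimum of the answers from the two sub-oracles.

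The correctness arguments are the same path-decomposition arguments already used for Theorems~\ref{thm:linearspace}, \ref{thm:cyclemssp}, and \ref{thm:general}: any $s$--$t$ shortest path either stays inside the piece containing both (handled by the internal oracle) or crosses some separating cycle, in which case it decomposes into subpaths between boundary/cycle nodes all of which are represented in the $DDG$'s and cycle-MSSP structures fed to FR-Dijkstra, so FR-Dijkstra recovers the true distance. The main obstacle I anticipate is the tree-width separator machinery: one must argue that a planar graph of tree-width $w$ admits a \emph{recursive} family of balanced separators that are simple cycles (or, more precisely, non-self-crossing closed walks) of length $O(w)$ and that the recursion terminates after $O(\lg n)$ levels with pieces small enough that the Theorem~\ref{thm:linearspace} machinery applies — this needs the interplay of branch-width/tree-width duality with planar duality, plus care that treating the cycle vertices as boundary vertices of every piece (as in the cycle MSSP preprocessing) does not blow up boundary sizes, which holds because each level already has $\Omega(w)$ boundary vertices. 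The log-factor bookkeeping needed to land exactly on $\lg^{2.5}n$ and on $\lg\lg w$ (rather than $\lg\lg n$) in the space bound is the other fiddly point, handled by choosing $\epsilon$-parameters in the internal recursions as functions of $w$ rather than $n$.
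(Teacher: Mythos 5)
Your instinct to replace Miller's $O(\sqrt n)$ cycle separator by a tree-width-sized cycle separator is exactly the right starting point, and you correctly identify that planarity plus bounded tree-width yields non-self-crossing separating cycles of length $O(w)$ (sphere-cut separators). However, the way you assemble the data structure diverges from the paper's proof and, as written, does not quite deliver the claimed bounds.

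The paper does \emph{not} build two separate oracles and take a min at query time. Instead it defines a single \emph{$[r,w]$--division}: an $r$--division in which each piece has $O(r)$ nodes and only $O(\min\{\sqrt r, w\})$ boundary nodes, obtained by running the Frederickson/Wulff-Nilsen $r$--division procedure but always using whichever of Miller's cycle separator or a sphere-cut separator is smaller (Lemma~\ref{lemma:spherecutrdiv}). The whole construction of Theorem~\ref{thm:general} is then rerun on this division with no further changes: the hole cycles now have length $O(\min\{\sqrt r, w\})$, so the cycle-MSSP query automatically costs $O(\min\{\sqrt r, w\}\lg^2 n\lg\lg n)$, which yields both branches of the $\min$ in the query bound from \emph{one} structure. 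Your version builds an FR-style recursive decomposition as a second, independent oracle and queries it with FR-Dijkstra over $O(\lg n)$ levels of $O(w)$-sized separators; that gives $O(w\lg n)$ boundary nodes in the Dijkstra run and hence query time on the order of $w\lg^3 n$, which is weaker than the claimed $O(w\lg^2 w\lg\lg w)$ when $w \ll n$. Moreover your second oracle stores DDGs over $\Theta(\lg n)$ recursion levels, so its space is $\tilde O(n)$, which need not be $O(S)$ since the theorem allows $S$ as small as $n\lg\lg w$ — e.g.\ for $w = O(\lg n)$ this is roughly $n\lg\lg\lg n$, far below $n\lg n$. The $[r,w]$--division sidesteps both problems: the DDGs of the pieces have only $O(\min\{r, w^2\})$ edges each, so the total space already shrinks to $O(S)$ without any separate data structure, and the cycle-MSSP query never touches more than one $r$--division level.

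Two smaller but real gaps: you leave the separator-finding step as a ``standard routing argument,'' whereas the paper relies on a specific algorithmic result of Gu and Tamaki (Lemma~\ref{lemma:gutamaki}) that produces a balanced non-self-crossing cycle separator of length $O(w/\epsilon)$ in $O(n^{1+\epsilon})$ time; this algorithm is precisely the source of the additive $n^{1+\epsilon}$ in the preprocessing bound, not the internal Theorem~\ref{thm:linearspace}-style recursion as you suggest. And Lemma~\ref{lemma:spherecutrdiv} is not a formality: when $\sqrt r = \omega(w)$ the usual accounting from~\cite{nlglgn-mincut-WN10} only gives $O(n/\sqrt r)$ boundary nodes overall, whereas one needs the sharper bound $O(nw/r)$, which requires reproving the boundary-counting argument with separator size $O(w)$ in place of $O(\sqrt{\text{piece size}})$. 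You flag this as ``the main obstacle I anticipate,'' which is the right instinct — but it is precisely the content of the proof you would need to supply.
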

Note that for $S$ superlinear in $n$ but less than roughly $nw$, the oracle cannot make any use of the additional space available and the query algorithm runs in time proportional to $w$ (up to logarithmic factors). Any application should use either space close to linear or more than $\Omega(nw)$.

Using Theorem~\ref{thm:treewgeneral}, the proof of Theorem~\ref{thm:pathlengthquerytime} boils down to a combination of 
{\em slicing} (see for example~\cite{journals/jacm/Baker94,journals/siamcomp/Klein08}), 
{\em local tree-width}~\cite{journals/algorithmica/Eppstein00,journals/algorithmica/DemaineH04}, and  
{\em scaling}.

\begin{proof}[of Theorem~\ref{thm:pathlengthquerytime}]
We repeatedly use Theorem~\ref{thm:treewgeneral} for different subgraphs as follows. 
\paragraph{Slicing and local tree-width.}We use standard techniques~\cite{journals/jacm/Baker94,journals/siamcomp/Klein08} 
and research on the related 
{\em linear local tree-width} property~\cite{journals/algorithmica/Eppstein00,journals/algorithmica/DemaineH04} to 
{\em slice} a planar graph into subgraphs of a certain tree-width: 
\begin{itemize}
\item We compute a breadth-first search tree in the planar dual rooted at an arbitrary face. 
\item The subgraph induced by the nodes at depth $d \in [d_1,d_2]$ has tree-width~$O(d_2-d_1)$~\cite{journals/algorithmica/Eppstein00,journals/algorithmica/DemaineH04}. 
\end{itemize}
If we were only interested in paths using at most $w$ edges, we could {\em (i)} cut 
the graph into slices of depth $w$, and {\em (ii)} check the union of any two consecutive levels containing both endpoints. 
It is straightforward to see that each path on $w$ edges lies completely within one of these unions. 

\paragraph{Scaling.}
We apply the slicing step described in the previous paragraph for different scales. 
\begin{itemize}
\item For every integer $i>0$ with $2^i\leq \sqrt{n}$, we slice the graph into subgraphs $G^i_j$ at depth $r=2^i$; here, $G^i_j$ denotes the graph 
induced by the nodes adjacent to all the faces at depth in $[jr,(j+1)r)$. Every subgraph $G^i_j$ has tree-width at most $O(r)$. 
\item For any two consecutive $G^i_j,G^i_{j+1}$ we compute a distance oracle of size $O(\abs{V(G^i_j\cup G^i_{j+1})}\lg\lg\abs{V(G^i_j\cup G^i_{j+1})})$ with query time $O(r\lg^2r\lg\lg r)$ 
as in Theorem~\ref{thm:treewgeneral}. Since each node is in at most two graphs $G^i_j$ and since each $G^i_j$ participates in at most two distance oracles, the total size of all these distance oracles per level $i$ is $O(n\lg\lg n)$. The total size of our data structure is thus $O(n\lg n\lg\lg n)$. 
\end{itemize}

\paragraph{Which Scale?}
Let the smallest number of edges (or {\em hops}) on any shortest path from $u$ to $v$ be~$h=h_G(u,v)$. 
At query time we use an approximate distance oracle to determine the right scale. 
In the preprocessing algorithm, we also precompute the approximate distance oracle of Thorup~\cite{ThorupJACM04} for $\epsilon=1/2$. This oracle 
can be computed in $O(n\epsilon^{-1}\lg^3n)$ time, it uses space $O(n\epsilon^{-1}\lg n)$, and it answers $(1+\epsilon)$--approximate distance queries in time $O(1/\epsilon)$. 
If Assumption~\ref{assumption:hoplength} holds, 
we instead use that value of $\bar\epsilon$ in the construction of Thorup's distance oracle. The asymptotic space consumption is not increased. 

\paragraph{Query Algorithm.}At query time, given a pair of nodes $(u,v)$ at distance $\ell$, we need to find a level that contains a shortest path. 
We query the approximate distance oracle in time $O(1/\epsilon)$ to obtain an estimate for~$\ell$. Let $\tilde\ell$ denote this estimate. 
We then execute one of the following search algorithms. 

In the case that Assumption~\ref{assumption:hoplength} does not hold, 
we directly query level $i$ for the smallest $i$ with $2^i\geq\tilde\ell$. 
Since all the edge weights are at least $1$, any path of length $\tilde\ell$ has at most $\tilde\ell$ edges and is thus contained in some graph $G_j^i$ at level $i$. 
Since graphs at level $i$ have tree-width $O(2^i)$, and since $2^i=O(\tilde\ell)=O(\ell)$, the running time of the query algorithm is $O(\ell\lg^2\ell\lg\lg\ell)$ as claimed.

If Assumption~\ref{assumption:hoplength} does hold, 
we search the data structure level by level with increasing~$i$ until the first time a distance at most $\tilde\ell$ is found. 
By Assumption~\ref{assumption:hoplength}, we know that any $u$-to-$v$ path of length $\leq(1+\bar\epsilon)\ell$ uses at least $c\cdot h_G(u,v)$ edges for some constant $c\leq1$. 
We therefore search the next $1-\lg_2c$ levels to ensure that we
find a shortest path. The running time is a geometric sum, which is dominated 
by the time to search the last level with tree-width $O(h_G(u,v))$. 
\qed
\end{proof}

\subsection{Distance Oracles for Planar Graphs with Tree-width $o(\sqrt n)$.}
In this section we prove Theorem~\ref{thm:treewgeneral}. 
There exist efficient distance oracles for (not necessarily planar)  graphs with tree-width $w$. 
Chaudhuri and Zaroliagis~\cite{journals/algorithmica/ChaudhuriZ00} provide a distance oracle that uses space $O(w^3n)$ 
and answers distance queries in time $O(w^3\alpha(n))$, where $\alpha(n)$ 
denotes the inverse Ackermann function.
Farzan and Kamali~\cite{conf/icalp/FarzanK11} provide a distance oracle that uses space $O(wn)$ and answers distance queries in time~$O(w^2\lg^3w)$. 

In our application, the tree-width~$w$ may be non-constant up to $O(\sqrt n)$. 
For this reason we cannot use their distance oracles. 
In the following we improve upon their results for the special case where graphs are further assumed to be planar: 
we can obtain space $O(n\lg\lg w)$ and query time $O(w\lg^2w\lg\lg w)$ using our distance oracle. 
%, which is a better space vs.~query time tradeoff for $w=\Omega(\sqrt[3]{\lg\lg n})$. 

In some sense our proof can be seen as an improvement over 
Djidjev's data structure with space $O(S)$ and query time $O(n^2/S)$. His data structure works on any graph with recursive balanced separators of size $O(\sqrt n)$ and, 
furthermore, similar results can be obtained for graphs with separators of general size $f(n)=o(n)$~\cite[Sections~3 and~4]{DjidjevWG96}. 
The data structure with the better trade-off (space $O(S)$ and query time $O(n/\sqrt S)$) however only works for planar graphs~\cite[Section~5]{DjidjevWG96}, 
since it exploits the Jordan Curve Theorem. We can now exploit the Jordan Curve Theorem for planar graphs with smaller separators by observing that, 
for planar graphs with smaller tree-width ($o(\sqrt n)$), 
the size of the Jordan Curve separating the inside from the outside decreases proportionally~\cite{journals/combinatorica/SeymourT94,journals/algorithmica/DornPBF10}. 
These separators are referred to as {\em sphere-cut separators}~\cite{journals/combinatorica/SeymourT94,journals/algorithmica/DornPBF10} and they can be found efficiently. 

\begin{lemma}[{Gu and Tamaki~\cite[Theorem~2]{conf/isaac/GuT09}}]
For any constant $\epsilon>0$ and for any biconnected vertex-weighted planar graph on $n$ nodes with tree-width~$w$, 
there is an $O(n^{1+\epsilon})$--time algorithm that finds a non-self-crossing cycle $C$ of length $O(w/\epsilon)$ 
such that any connected component of $G\setminus C$ has weight at most $3/4$ the total weight. 
\label{lemma:gutamaki}
\end{lemma}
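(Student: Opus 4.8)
Since the statement is a verbatim restatement of~\cite[Theorem~2]{conf/isaac/GuT09}, the plan is to invoke that result directly; for completeness I sketch the argument behind it. The key fact is that tree-width and \emph{branch-width} differ by only constant factors, so it suffices to produce a branch decomposition of width $O(w)$. For planar graphs, Seymour and Thomas showed that such a decomposition can be realized as a \emph{sphere-cut decomposition}: a ternary tree whose leaves are the edges of $G$, in which every tree-edge is associated with a \emph{noose} --- a closed curve in the plane that meets $G$ only in vertices and crosses no edge --- and deleting the tree-edge splits the edges of $G$ into the two sides of its noose, each noose visiting $O(w)$ vertices (biconnectivity is what guarantees this clean form). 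The first step is to compute such a decomposition of width $O(w/\epsilon)$ in $O(n^{1+\epsilon})$ time; this is exactly the content of Gu and Tamaki's approximation algorithm (built on the rat-catching algorithm of Seymour--Thomas together with carving/branch-width approximation), and it is the step I expect to be the real obstacle, since exact branch-width computation runs only in polynomial --- not near-linear --- time.

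Given the decomposition, the second step is a centroid-edge argument on the ternary tree, but carried out with respect to the prescribed vertex weights rather than the edge count. Root the tree arbitrarily and descend toward the child subtree holding more than $3/4$ of the total weight, stopping at the first tree-edge neither of whose noose-sides carries more than $3/4$ of the weight. One must be slightly careful about vertices lying \emph{on} the chosen noose (they can be charged to whichever side is convenient) and about degenerate cases in which a single heavy vertex forces a weaker constant; this is why the guaranteed balance is $3/4$ rather than $1/2$. The outcome is a simple closed curve crossing $G$ in $O(w/\epsilon)$ vertices.

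The last step turns this noose into an honest non-self-crossing cycle $C$ of $G$. The noose passes alternately through vertices and through faces of $G$; for each face it visits, replace the arc of the noose inside that face by a walk along that face's boundary between the two relevant vertices. Because the noose is non-self-crossing and each replacement walk stays inside a single face, the resulting closed walk is non-self-crossing, and after deleting repeated vertices one obtains a simple cycle of length $O(w/\epsilon)$. The two regions bounded by $C$ contain the same vertices as the two noose-sides except for the $O(w/\epsilon)$ vertices of $C$ itself, so every connected component of $G\setminus C$ still has weight at most $3/4$ of the total. The decomposition step dominates, giving the claimed $O(n^{1+\epsilon})$ running time.
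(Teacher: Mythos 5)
Your first two steps match the paper's own proof: invoke Gu--Tamaki to get a sphere-cut branch decomposition of width $O(w/\epsilon)$ in $O(n^{1+\epsilon})$ time, then run the standard centroid-edge argument on the ternary tree with respect to vertex weights to pick a tree-edge $e^*$ whose noose separates $G$ into two sides each carrying at most $3/4$ of the weight (the $3/4$ comes from the ternary structure). The paper then simply declares the associated self-non-crossing closed curve through $O(w/\epsilon)$ nodes of $G$ to \emph{be} the separator $C$; in this context ``non-self-crossing cycle'' means a Jordan curve that meets $G$ only in vertices, and $G\setminus C$ means deleting the $O(w/\epsilon)$ vertices that lie on the curve. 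No further conversion is needed.

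Your third step --- turning the noose into a bona fide cycle of $G$ by replacing each face-arc with a walk along that face's boundary --- is where the argument breaks. A face can have arbitrarily many boundary vertices, so a single replacement can inflate the walk by $\Theta(n)$ vertices; ``deleting repeated vertices'' afterward does not help, because the extra vertices are new, not repeats. As stated, this step destroys the $O(w/\epsilon)$ length bound. If you genuinely want a cycle of $G$ rather than a noose, the standard fix is to first triangulate $G$ with dummy (e.g.\ infinite-length) edges so that every face-arc of the noose can be replaced by a single edge, keeping the cycle length $O(w/\epsilon)$; this is exactly how Miller-type cycle separators are usually handled downstream, and it is what the paper implicitly relies on when it later feeds such separators to the $r$-division and cycle MSSP machinery. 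Either drop the conversion step entirely (and accept the noose as the separator, as the paper does), or add the triangulation.
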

\begin{proof}
The algorithm computes a {\em branch decomposition} as in Gu and Tamaki~\cite[proof of Theorem~2]{conf/isaac/GuT09}. 
It is known that branch-width and tree-width are within constant factors~\cite{journals/jct/RobertsonS91}.

A {\em branch decomposition}~\cite{journals/jct/RobertsonS91} of a graph $G$ is a ternary tree $T$ whose leaves correspond to the edges of $G$. 
Removing any tree edge $e\in T$ creates two connected components $T_1,T_2\subseteq T$, each of which corresponds to a subgraph $G_i$ of $G$, 
induced by the edges corresponding to the set of leaves of $T$ in $T_i$.  
In the branch decomposition tree, we may thus choose the edge $e^*$ that achieves the best balance among all the edges $e\in T$. 

In the algorithm of Gu and Tamaki, each edge of the branch decomposition tree corresponds to a {\em self-non-crossing closed curve} passing through $O(w/\epsilon)$ 
nodes of $G$ (as in sphere-cut decompositions) that encloses $G_1$ and does not enclose $G_2$ (or vice versa). 

Since the degree of each node is at most three, it is possible to choose an edge $e^*$ such that the total weight strictly enclosed by the non-self-crossing cycle $C_{e^*}$ 
and the total weight strictly not enclosed by $C_{e^*}$ are each at most a $3/4$--fraction of the total weight. 
The cycle $C_{e^*}$ is our balanced separator of length $O(w/\epsilon)$. 
\qed
\end{proof}

For an optimal sphere-cut decomposition, the fastest algorithm runs in cubic time~\cite{journals/combinatorica/SeymourT94,journals/talg/GuT08}. For our 
purposes, the constant approximation in Lemma~\ref{lemma:gutamaki} suffices. 

We define a variant of the $r$--division as in Section~\ref{sec:rdiv}, wherein we use either Miller's cycle separator or sphere-cut separators, whichever is smaller.
\def\rwdivision{An {\em $[r,w]$--division} of $G$ with tree-width $w$ is a decomposition into $\Theta(n/r)$
edge-disjoint pieces,
each with $O(r)$ nodes and $O(\min \set{\sqrt{r},w})$ boundary nodes.}
\rwdivision

\def\spherecutrdiv{For any constant $\epsilon>0$ and for any planar graph on $n$ nodes with tree-width~$w$, an {\em $[r,w]$--division} can be
found in time $O(\spherecut\lg n + n \lg r + nr^{-1/2}\lg n)$.}
\begin{lemma}
\spherecutrdiv
\label{lemma:spherecutrdiv}
\end{lemma}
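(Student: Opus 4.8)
The plan is to mimic the standard recursive-separator construction of an $r$--division (via iterated applications of Miller's cycle separator, as in Frederickson~\cite{journals/siamcomp/Frederickson87} and the refinements of~\cite{nlglgn-mincut-WN10}), but at each recursive step choose the cheaper of two separators: Miller's cycle separator, which has size $O(\sqrt{m})$ on a subgraph with $m$ vertices, and the sphere-cut-type non-self-crossing cycle of Lemma~\ref{lemma:gutamaki}, which has size $O(w/\epsilon)=O(w)$ on \emph{any} subgraph of a planar graph of tree-width $w$. Since tree-width is monotone under taking subgraphs, every subgraph encountered during the recursion still has tree-width at most $w$, so Lemma~\ref{lemma:gutamaki} applies throughout and the separator we use always has size $O(\min\set{\sqrt{m},w})$ where $m$ is the current subgraph size. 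First I would run the recursion to break $G$ into $\Theta(n/r)$ pieces of size $O(r)$; then, exactly as in the usual $r$--division argument, bound the number of boundary nodes created. The key point is that a piece of size $\Theta(r)$ acquires boundary nodes only from separators applied to ancestors of size $\Omega(r)$ down to size $\Theta(r)$, and the geometric decay of subgraph sizes makes the total boundary of each piece dominated by the size of the last few separators applied to it, namely $O(\min\set{\sqrt r,w})$ per piece — summing to $O((n/r)\min\set{\sqrt r,w})$ boundary nodes overall, which meets the $[r,w]$--division specification.

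For the running-time bound I would account separately for the two regimes of the recursion. While the current subgraph has size $m$ with $\sqrt m \le w$ (equivalently $m\le w^2$), Miller's separator is at least as small, so we may simply use it; the total work of the Miller-separator recursion is the usual $O(n\lg r + n r^{-1/2}\lg n)$ from~\cite{nlglgn-mincut-WN10}, where the first term is the cost of producing the hierarchy down to pieces of size $r$ and the second term accounts for charging boundary nodes. In the high-tree-width regime, where a subgraph has size $m$ with $m > w^2$, invoking Lemma~\ref{lemma:gutamaki} costs $O(m^{1+\epsilon})$ time and produces a balanced split (each side at most a $3/4$--fraction). The subgraphs at a fixed "level" of this part of the recursion are vertex-disjoint (sphere-cut separators split the graph along a Jordan curve into disjoint interior and exterior, with the cycle duplicated into both parts; the $O(w)$ duplicated vertices per separator contribute only lower-order terms), so the $O(m^{1+\epsilon})$ costs sum to $O(n^{1+\epsilon})$ per level, and there are $O(\lg n)$ levels before the subgraph sizes drop to $O(w^2)\le O(n)$, giving $O(n^{1+\epsilon}\lg n)$ total; since $\spherecut\lg n = n^{1+\epsilon}\lg n$ this is within the claimed bound (and one may absorb the extra $\lg n$ into a slightly larger constant $\epsilon$ as is standard). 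Adding the two regimes yields the stated $O(\spherecut\lg n + n\lg r + nr^{-1/2}\lg n)$.

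Two technical points need care, and one of them is the main obstacle. The minor point is the "holes" requirement implicit in our use of $r$--divisions elsewhere: each piece must have its boundary nodes incident to a constant number of faces. Miller's-separator recursions are known to yield this (cf.~\cite{nlglgn-mincut-WN10}), and a sphere-cut separator is a single non-self-crossing cycle, so cutting along it creates at most one new hole per side; thus the number of holes per piece stays $O(1)$ across both regimes, and I would just remark on this rather than belabor it. The main obstacle is the \emph{boundary-balancing} step: to get $\Theta(n/r)$ pieces each of size $O(r)$ \emph{and} total boundary $O((n/r)\min\set{\sqrt r,w})$, one normally needs the two phases of Frederickson's argument — first carve into $O(n/r)$ pieces of size $O(r)$, then, if some piece has too many boundary nodes, recursively re-divide it to redistribute the boundary. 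Making this work cleanly with sphere-cut separators requires checking that the balance guarantee of Lemma~\ref{lemma:gutamaki} (which is $3/4$, not $1/2$, and is vertex-weighted) still drives the recursion depth down logarithmically and that re-dividing a piece along an $O(w)$--size cycle does not blow up the boundary count; here the key inequality is that within a single piece of size $\Theta(r)$ the geometric series of separator sizes $\sum_j \min\set{\sqrt{r/c^{\,j}},w}$ telescopes to $O(\min\set{\sqrt r,w})$. I would verify this inequality explicitly and otherwise invoke the standard $r$--division machinery essentially verbatim, since none of it depends on which separator oracle is plugged in beyond the size bound it guarantees.
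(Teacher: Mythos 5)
Your high-level plan --- recurse with whichever of Miller's separator or the Gu--Tamaki sphere-cut cycle is smaller, and run this through the $r$--division machinery of~\cite{nlglgn-mincut-WN10} --- matches the paper's, and your running-time accounting (sphere-cut calls summing to $O(n^{1+\epsilon}\lg n)$ plus the usual Miller terms) is fine. The gap is in the boundary count, which you correctly flag as the main obstacle but then propose to close with an inequality that is false. You claim that the separator sizes seen while (re-)dividing a piece telescope: $\sum_j \min\{\sqrt{r/c^j},w\} = O(\min\{\sqrt r, w\})$. When $w \ge \sqrt r$ this is the usual geometric sum $O(\sqrt r)$, but when $\sqrt r = \omega(w)$ the first $\Theta(\lg(r/w^2))$ terms each equal $w$, so the sum is $\Theta(w\lg(r/w^2))$, not $O(w)$. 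More fundamentally, no per-piece telescoping argument can work here: in the weak $r$--division a single bottom-level piece can inherit boundary nodes from every ancestor separator up to the root, and in the refinement phase re-dividing a single piece of size $\Theta(r)$ can legitimately introduce $\omega(w)$ new boundary vertices in aggregate (spread across many sub-pieces). The per-piece bound of $O(w)$ is not obtained by summing a geometric series; it is the stopping condition of the refinement.

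The paper's proof avoids per-piece telescoping entirely. It splits on $r$, not on the current subgraph size: if $\sqrt r = O(w)$ the chosen separator is never larger than Miller's, so~\cite[Lemmata~2,3]{nlglgn-mincut-WN10} apply verbatim and give $O(\sqrt r)$ boundary per piece; if $\sqrt r = \omega(w)$ every separator used has size $O(w)$, and the \emph{total} boundary of the weak $r$--division is bounded by a global count over the recursion tree ($\Theta(n/r)$ leaves, hence $O(n/r)$ internal nodes, each introducing $O(w)$ new boundary vertices, so $B(n)=O(nw/r)$). The refinement then subdivides any piece with more than $c'w$ boundary vertices, and the number of new pieces and new boundary vertices is bounded via $\sum_i i\,t_i \le 2B(n) = O(nw/r)$, where $t_i$ counts pieces with exactly $i$ boundary vertices. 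You would have hit this when trying to ``verify the inequality explicitly'' as you say; the fix is to abandon the telescoping sum and switch to the $B(n)$ and $\sum_i i\,t_i$ counting argument, getting the per-piece $O(w)$ from the refinement's stopping rule.
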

%The proof is similar to that of ~\cite{journals/siamcomp/Frederickson87,nlglgn-mincut-WN10} and appears in the appendix. 
\begin{proof}%[Proof of Lemma~\ref{lemma:spherecutrdiv}]
Fix $\epsilon$ to be the desired constant $>0$. %In the following, we shall omit $\epsilon$. 
The procedure for obtaining the $[r,w]$--division is the one described in~\cite{nlglgn-mincut-WN10}, but using either Miller's cycle separator or sphere-cut separators, whichever is smaller.
If $\sqrt{r} = O(w)$ then since the separators we use are at least as
small as the separators assumed in the proof of~\cite[Lemmata~2 and~3]{nlglgn-mincut-WN10}, the lemmas apply and the resulting decomposition has $\Theta(n/r)$ pieces, each with $O(r)$ nodes and $O(\sqrt{r})$ boundary nodes.

If $\sqrt{r} = \omega(w)$ then first apply the separator theorem as in the procedure for obtaining a weak $r$--division~\cite[Lemma~2]{nlglgn-mincut-WN10} until every piece has size $O(r)$. Note that since $\sqrt{r} = \omega(w)$, we always use sphere-cut separators, whose size is $O(w)$ (hiding a linear factor of $1/\epsilon$), regardless of the size of the piece we separate. Therefore, by~\cite[Lemma~2]{nlglgn-mincut-WN10}, the number of pieces is $\Theta(n/r)$, and the total number of boundary nodes is $O(n/\sqrt{r})$. However, here we need a tighter bound on the total number of boundary nodes. In the following we show that the total number of boundary nodes is $O(nw/r)$. 

Consider the binary tree whose root corresponds to~$G$, and whose leaves correspond to the pieces of the weak $r$--division. Each internal node in the tree corresponds to a piece in the recursive decomposition that either consists of too many nodes or contains too many holes and is therefore separated into two subpieces using a sphere-cut separator. 
As in the proof of~\cite[Lemma~2]{nlglgn-mincut-WN10}, for any boundary vertex $v$ in the
weak $r$--division, let $b(v)$ denote one less than the number of pieces containing $v$ as a boundary vertex. 
Let $B(n)$ be the sum of $b(v)$ over all such $v$. Every time a
separator is used, at most $cw$ boundary nodes are introduced for some $c>0$, and the piece to which these nodes belong to is split into two pieces, so the number of pieces to which these nodes belong to increases by one. Therefore, $B(n)$ is bounded by the number of internal nodes times $cw$.
Since the tree is binary, the number of internal nodes is bounded by
the number of leaves, so $B(n) = O(nw/r)$, which shows that the total
number of boundary nodes in this weak $r$--division is $O(nw/r)$.

Next, consider the procedure in~\cite[Lemma~3]{nlglgn-mincut-WN10}, which further divides the pieces of the weak $r$--division to make sure that the number of boundary nodes in each piece is small enough. In our case we want to limit the number of boundary nodes per piece to at most $c'w$ for some $c'$ (which we set $>c$ below).  Let $t_i$ denote the number of pieces in the weak $r$--division with exactly $i$ boundary nodes. Note that $\sum_i it_i = \sum_{v\in V_B}(b(v) + 1)$, where $V_B$ is the set of boundary vertices over all
pieces in the weak $r$--division. Hence, $\sum_i it_i  \leq 2B(n)$, so by the bound on $B(n)$, $\sum_i it_i = O(nw/r)$.

In the weak $r$--division, consider a piece $P$ with $i > c'w$ boundary vertices. When the above procedure splits $P$, each of the subpieces contains at most a constant fraction of the boundary vertices of~$P$. Hence, after
$di/(c'w)$ splits of $P$ for some constant $d$, all subpieces contain at most $c'w$ boundary vertices. This results 
in at most $1 + di/(c'w)$ subpieces and at most $cw$ new boundary
vertices per split. We may choose $c'$ to be sufficiently larger than $c$. The total number of new boundary vertices introduced by the above procedure is thus
\[
  \sum_i cw(di/(c'w))t_i\leq d\sum_i it_i = O(nw/r)
\]
and the number of new pieces is at most
\[
  \sum_i (di/(c'w))t_i = O(n/r).
\]
Hence, the procedure generates an $[r,w]$--division. Since a sphere-cut separator can be found in $O(n^{1+\epsilon})$ time, a weak $r$--division can be found in $O(n^{1+\epsilon}\lg n) = O(n^{1+\epsilon'})$ time, and refining it into an $[r,w]$--division can 
be done within this time bound.
\qed
\end{proof}

\begin{proof}[of Theorem~\ref{thm:treewgeneral}.]
The data structure for planar graphs with smaller tree-width is essentially the same as the data structure for general planar graphs 
as described in Section~\ref{sec:general:wholerange}. 
The main difference is that, when computing a cycle separator, instead of Miller's algorithm to find a cycle separator of length $O(\sqrt n)$, 
we use sphere-cut separators and the corresponding $[r,w]$--division as in Lemma~\ref{lemma:spherecutrdiv}. 
\qed
\end{proof}

\section{Conclusion.}

We introduce a new data structure to answer distance queries between any node $v$ and all the 
nodes on a not-too-long cycle of a planar graph. Using this tool, 
we significantly improve the worst-case query times for distance oracles with low space requirements 
 $S$ (down from $O(n^2/S)$ to $\lgO(\sqrt{n^2/S})$). 
Furthermore, for linear space, we improve the query time down from $O(n)$ to $O(n^{1/2+\epsilon})$ using adaptive recursion. 
We also give the first distance oracle that actually exploits the tree-width of a planar graph, particularly if it is $o(\sqrt n)$ 
and, as an application, we give a distance oracle whose query time is roughly proportional to the shortest-path 
length. % --- notably without any sacrifices in the worst-case behavior (i.e.~for paths on $\omega(\sqrt n)$ edges the query time is still $\lgO(\sqrt n)$). 
Similar behavior of practical methods had been observed experimentally before but could not be proven until the current work. 
%We hope that future work will provide a distance oracle with provable query time proportional to the number of edges on the shortest path. 

An interesting and important open question is whether there is another tradeoff curve below the space $O(S)$ and query time $O(n/\sqrt S)$ curve.

\section*{Acknowledgments.}
We thank Hisao Tamaki for helpful discussions on~\cite{conf/isaac/GuT09}.  

%\newpage
\bibliographystyle{alpha}
\bibliography{planaradq}

\comment{

           Paper title: Exact Distance Oracles for Planar Graphs

----------------------- REVIEW 1 ---------------------
PAPER: 140
TITLE: Exact Distance Oracles for Planar Graphs
AUTHORS: Shay Mozes and Christian Sommer

The paper considers exact distance oracles for planar graphs. This problem is well motivated from a practical point of 
view since e.g. road maps are (near-)planar.

An improvement over previous oracles in the space/query time tradeoff is given when space is sufficiently small. Also, 
a linear-space oracle is given that answers distance queries in sublinear time; previously no such oracle was known. 
Finally, assuming all edge lengths are at least 1, an oracle is given that answers queries in time roughly bounded by 
the length of a shortest path between the query nodes.

For the first result, I think it is nice to fill in the gap shown in Figure 1. However, I find it difficult to measure the 
quality of the result. Is this lower line in the graph really the right answer or could there be a better query time/space 
tradeoff? Nothing suggests that the line is close to the right answer. The paper uses many existing techniques and this 
may explain why the new result intersects with this line.

The second result improves space by a log-factor to linear (compared to the algorithm of Fakcharoenphol and Rao) but 
I feel the result is weakened somewhat since query time is increased by a polynomial factor (and since a similar result 
of Nussbaum has already been published, as mentioned in the paper). Would linear space and O~(n^{1/2}) query time 
be possible?

Most of the techniques used are well-known in the literature; the main new technical contribution is Theorem 4 which 
gives an algorithm to answer distance queries from any vertex to all vertices on a fixed non-crossing cycle. In the 
paper, it is stated that this theorem is more general than Klein's algorithm. I understand what is meant by this since 
Klein's algorithm requires one vertex to be on a fixed face (for an embedded graph). But for the new algorithm, 
preprocessing is slower and, except for very short cycles, query time is also slower (in fact, it is much slower for larger 
c if we are only interested in a few, not all, distances to vertices on the cycle) so I would not call it a generalization. For 
this reason, I also feel that the applications of Theorem 4 will be more limited than the algorithm of Klein.

On page 3, you say that the results in [36, 41] immediately improve [21]. I do not see how. Klein's algorithm improves 
preprocessing for [21] to O(n\log^2n) but I do not see any further improvement without a cost in query time. In [41], 
preprocessing is O(n\log^2n/\log\log n) but as far as I can see, a decomposition with fewer layers is used there which 
should increase query time for the algorithm in [21] (the top decomposition layer alone seems to contain 
asymptotically more than order \sqrt n boundary vertices and all those vertices are needed by the algorithm in [21]).

On page 3 you say that \epsilon = 1/\log n in Theorem 3 improves earlier results. But in Theorem 3 (the first version), 
you assume that \epsilon is a constant. Only later (page 5) do you extend the theorem to non-constant \epsilon. You 
should present this extension before page 3.

On the bottom of page 3, you state that storing a complete distance matrix is almost optimal for constant query time. 
Do you refer to space here? And why is this almost optimal? I am not aware of any non-trivial space lower bounds for 
planar graphs with constant query time.

On page 6, you denote certain distances by dist_{P_0}(u,w) and dist_G(u,w). This notation is confusing; what is w?

In the Query Time paragraph on page 6, you should use O-notation for running time.

Last line of Section 3: replace "By" with "By picking".

You should state in the beginning of Section 4 that G is embedded. Otherwise, 'faces' and 'exterior of C' are not well-
defined. In the definition of G_1, you should make it clear that nodes of C are not deleted.

In step 2 of the preprocessing, what do you mean by "Consider the nodes of C as boundary nodes of every piece in the 
division"? Should a node of C belong to a piece in order to be regarded as a boundary node of that piece?

On the top of page 9, you mention regions; do you mean pieces?

In Theorem 5, you should not say "at most" since this is implied by the O-notation.

----------------------- REVIEW 2 ---------------------
PAPER: 140
TITLE: Exact Distance Oracles for Planar Graphs
AUTHORS: Shay Mozes and Christian Sommer

The authors give improved exact distance oracles for planar graphs.

One of the main results (Theorem 3) gives a data structure with space O(n), construction time O(n logn), and query time O(n^{1/2+eps}). The authors claim that this is the first structure of this type. However, it is worth noting that this is only a small improvement over the result of Fakcharoenphol and Rao, who give a structure that is worse only by logarithmic factors. 

The tool developed in the paper is a structure that answers exact distance queries between a vertex and all vertices in a given cycle. This generalizes the structure of Klein that works for the case when the cycle is a face of the graph. This seems like a useful data structure, that might be applicable elsewhere.

Overall this is a nice paper. The main weakness in my opinion is that the improvements over previous work seem somewhat incremental.

Finally, I find somewhat misleading the discussion at the Conclusion section. There, the authors claim that their main tool, the Cycle MSSP structure, is used to obtain among other things Theorem 3. This does not seem to be true. Perhaps I am missing something, but the proof of Theorem 3 uses only the previously known MSSP structure due to Klein.

----------------------- REVIEW 3 ---------------------
PAPER: 140
TITLE: Exact Distance Oracles for Planar Graphs
AUTHORS: Shay Mozes and Christian Sommer

The paper suggested new data structures (distance oracles) for exact distances in planar graphs. The results improve on the space-time trade-off of previous results. Results seem near optimal (with some poly-log slack) and should definitely be interesting to relevant experts.

The techniques seem to be a skillful combination of several ideas that appear in previous papers. The starting point is the use of r-divisions [22], this an interesting and seemly fruitful approach for exact distances. To obtain their desired bounds they need to use techniques of [21] and Klein's MSSP [34]. The use of MSSP is non-trivial and interesting by itself (section 4). Later they use several other ideas (local tree width and approximate distance oracles).

The authors should provide a detailed comparison with the techniques and results of [42].

----------------------- REVIEW 4 ---------------------
PAPER: 140
TITLE: Exact Distance Oracles for Planar Graphs
AUTHORS: Shay Mozes and Christian Sommer

I am adding an extra review. I asked Christian Sommer about the overlap with the recent and independent Nussbaum (WADS'11), and in particular, he says there is NO overlap with 
the main result with the time-space trade-off from Theorem 1, nicely illustrated in 
Figure 1.

My evaluation is based exclusively on this main result, hence with Nussbaum deducted.

Since 1996, we have had the classic complete trade-off curve with S versus query time n^2/S.

Also, there has been several works towards a trade-off with a much more challenging query time that is only the square root of the above; namely n/sqrt(S). 

Over the years, researchers have filled in different parts of this much better trade-off.

Didjev [WG'96] started it for S in (n^{4/3},n^{3/2}).

Chen and Xu [STOC'00] completed it for larger space, that is, S> n^{3/2}.

Fakcharoenpoel and Rao [FOCS'01] discovered the first point of the cure with S=n.

Thus, we have had STOC/FOCS papers filling in pieces of the curve, but leaving out a dissatisfying gap for S in (n,n^{4/3}).

Now, more than a decade later, this paper finally completes the curve.

I read an old version of the paper, and was convinced of its correctness.

> The one and only overlap with Nussbaum is the *linear-space* distance
> oracle (which is not our main contribution but many of the
> technicalities are used all over, which is why we put it first for
> readability).
> We obtain a (slightly) better tradeoff than he does (with respect to
> the dependency on eps).
> 
> Everything else is different.
> 
> In particular, the Cycle MSSP is new. Using this data structure, we
> obtain all the other results.
> 
> Before our work, there was no oracle for space n^7/6 for example.
> Any space S\in (n, n^4/3) had not been covered (let me emphasize the
> round brackets meaning that I do not include n and n^4/3 in this
> statement; also ignoring log factors). Also not by Nussbaum, as far as
> I understand.
> 
> We get the tree-width result. Using which we can also get the data
> structure with query time proportional to path length.
> 
> Please let me know if you have any further questions.
>

}

\end{document}